\newcommand{\lt}{<}
\newcommand{\gt}{>}
\newtheorem{theorem}{Theorem}
\newtheorem{lemma}{Lemma}
\newtheorem{remark}{Remark}
\newtheorem{prop}{Proposition}
\newtheorem{assumption}{Assumption}
\title{Variance Reduction for Causal Inference}
\author{
 Kangjie Zhou\thanks{Department of Statistics, Stanford University} 
 \and 
 Jinzhu Jia\thanks{Department of Biostatistics, School of Public Health, Peking University}
}
\date{}
\begin{document}
	\maketitle
	
	\begin{abstract}
		Propensity score methods have been shown to be powerful in obtaining efficient estimators of average treatment effect (ATE) from observational data, especially under the existence of confounding factors. When estimating, deciding which type of covariates need to be included in the propensity score function is important, since incorporating some unnecessary covariates may amplify both bias and variance of estimators of ATE. In this paper, we show that including additional instrumental variables that satisfy the exclusion restriction for outcome will do harm to the statistical efficiency. Also, we prove that, controlling for covariates that appear as outcome predictors, i.e. predict the outcomes and are irrelevant to the exposures, can help reduce the asymptotic variance of ATE estimation. We also note that, efficiently estimating the ATE by non-parametric or semi-parametric methods require the estimated propensity score function, as described in Hirano et al. (2003)\cite{Hirano2003}. Such estimation procedure usually asks for many regularity conditions, Rothe (2016)\cite{Rothe2016} also illustrated this point and proposed a known propensity score (KPS) estimator that requires mild regularity conditions and is still fully efficient. In addition, we introduce a linearly modified (LM) estimator that is nearly efficient in most general settings and need not estimation of the propensity score function, hence convenient to calculate. The construction of this estimator borrows idea from the interaction estimator of Lin (2013)\cite{Lin2013}, in which regression adjustment with interaction terms are applied to deal with data arising from a completely randomized experiment. As its name suggests, the LM estimator can be viewed as a linear modification on the IPW estimator using known propensity scores. We will also investigate its statistical properties both analytically and numerically.
	\end{abstract}
	
	\section{Introduction}
	\noindent Methodological advances have resulted in great progress on estimation of average treatment effect (ATE) in observational studies, especially in situations where there are a large number of covariates, including instrumental variables, confounding factors and outcome predictors. A great amount of techniques dedicating to reduce this huge dimensionality have been discovered, especially the propensity score matching (PSM), see Rosenbaum and Rubin (1983)\cite{RosenbaumRubin1983} for a complete view. The validity of propensity score matching is ensured by the so-called unconfoundedness assumption, i.e. there are no unmeasured confounders that are not included in the covariates. Based on the unconfoundedness assumption (or ignorable treatment assignment assumption), several $\sqrt{n}-\text{consistent}$ estimators of ATE were proposed, among which the most popular one makes use of propensity score as an inverse weight to balance the covariates' distribution among treatment group and control group. The readers can refer to Imbens (2004)\cite{Imbens2004} for a complete review of these estimators and related approaches. While using propensity score matching to estimate ATE, which covariates should be included is an important but confusing issue.
	
	Previously a "throw in the kitchen sink" mentality has been used to include as many covariates in to the propensity score model as one can (Shortreed, 2017)\cite{Shortreed2017} in pursuit of fulfillment of unconfoundedness assumption. But one may sometimes include too many instrumental variables that may cause efficiency loss of ATE estimators. On the one hand, according to previous studies (Lin, 2013)\cite{Lin2013}, in a completely randomized experiment, incorporating pre-treatment characteristics that are not effected by the treatment assignment will lead to efficiency gain, when compared to the simple intention-to-treat (ITT) estimator. Specifically speaking, incorporating interaction terms between such covariates and treatment assignment into linear regression help decrease asymptotic variance of regression coefficient of exposure, i.e., the interaction estimator of ATE. Simulation studies from Brookhart et al. (2006)\cite{Brookhart2006} also supported the similar result in observational studies, when the treatment design may not be randomized due to the existence of confounders. We will prove this result rigorously for several estimators in later sections. 
	
	But on the other hand, including variables associated with exposure but not potential outcomes, i.e., instrumental variables will enlarge asymptotic variance of estimator of ATE. This idea has been suggested and examined numerically by several researchers. De Luna et al. (2011)\cite{DeLuna2011} addressed the variance inflation effect caused by including additional instrumental variables and attached great significance to reducing the dimension of covariates by variable selection. They proposed a covariate selection method in order to avoid efficiency loss and eliminate bias. Patrick et al. (2011)\cite{Patrick2011} also used an empirical illustration to highlight this idea. Hahn (2004)\cite{Hahn2004} proved that including some specific kind of instrumental variables, i.e., those are independent of potential outcomes when conditioning on confounders will inflate the semi-parametric efficiency bound, and we will step further to show that this result holds for more estimators.

	Now we turn to consider efficient estimation of ATE under the unconfoundedness assumption. Usually, a semi-parametric $\sqrt{n}$-consistent estimator of ATE is said to be asymptotic efficient if its asympotic variance achieves the semi-parametric efficiency bound. Newey (1990)\cite{Newey1990} formally proposed the idea of semi-parametric efficiency bound. For a semi-parametric model, its efficiency bound is defined by the infimum over asymptotic variances of all possible semi-parametric estimators. It is in fact a semi-parametric generalization of Cramer-Rao lower bound (Cramer et al., 1946\cite{cramer2016mathematical}; Rao, 1945\cite{rao1992information}) for parametric models developed in mathematical statistics. The conception of semi-parametric lower bound can be extended to a wide range of statistical models, including the Neyman-Rubin Causal Model. Hahn (1998)\cite{Hahn1998} also investigated the imputation estimator using estimation of both propensity score function and conditional expectations of potential outcomes. An astonishing conclusion is that knowing propensity score is not necessary and may even be harmful for efficient estimation of ATE sometimes, for instance, the IPW estimator. Assume that the propensity scores are known, then the IPW estimator of ATE using known propensity scores (called the original IPW estimator in sequel) is not efficient in the meaning that its asymptotic variance is strictly larger than the semi-parametric efficiency bound. However, another version of the IPW estimator in which propensity scores are replaced by their $\sqrt{n}-$consistent estimators, is proven to be asymptotically efficient. The same phenomenon also happens for the imputation estimator, i.e., if the estimators of propensity scores are replaced by their true values, asymptotic variance will inflate. But in the mean time, we must notice that, although using nonparametric estimators of propensity score function help improve statistical efficiency, the estimation procedure itself can be rather difficult and erroneous and requires a lot of regularity conditions that are usually hard to satisfy. When the proposity scores are known, it is necessary to propose an estimator that uses the known propensity score function and possess decent efficiency.
	
	Assume that the propensity score function is known. Accordingly, we will propose a new estimator of ATE, namely the linearly modified (LM) estimator which can be viewed as a linear modification of the original IPW estimator to address both issues aforementioned. It utilizes difference between inverse probability weighted mean of covariates in the treatment group and the control group as a correction term to modify the original IPW estimator. The LM estimator only depends on known propensity score function and observed outcomes, hence easy to calculate. Compared to the original IPW estimator, the LM estimator is far more efficient in the sense that its asymptotic variance is very close to the semi-parametric efficiency bound under most general settings. This claim will be verified by simulations when the potential outcomes satisfy a linear model and the propensity scores satisfy a logistic model. Compared to statistically efficient estimators mentioned above, it does not require consistent estimation of propensity scores, hence is easier to calculate and asks for less regularity conditions.
	
	The remainder of this paper is organized as follows. Section 2 provides necessary prerequisites about basic knowledge and notation of causal inference. In Section 3, we discuss estimators of ATE constructed by propensity score matching and analyze their statistical properties. We will then prove that including outcome predictors always help improve statistical efficiency of estimators mentioned above. We also review variance bound of semi-parametric estimators and illustrate the variance inflation effect of introducing a specific kind of instrumental variables rigorously, Hahn (2004)\cite{Hahn2004}'s result will be extended. In section 4, we will propose our linearly modified estimator and discuss its properties. Comparison with other estimators is also provided. We will present some numerical simulations in Section 5 to illustrate that the LM estimator is nearly efficient in general situations. We close with a conclusion discussion in Section 6. Technical Proofs will be provided in Appendix.
	
	\section{Preliminaries}
	
	\subsection{Causal Inference}
	
	\noindent The major goal of causal inference study is to estimate the average treatment effect (ATE) from observational data or randomized experiments. In observational study, embracing the Neyman-Rubin model,  the ATE is defined to be the difference between expectations of two potential outcomes, denoted by $Y^T$ and $Y^C$ respectively. Let $Y$ denote the observed outcome, if the corresponding individual is assigned to the treatment group, then $Y=Y^T$, if the corresponding individual is assigned to the control group, then $Y=Y^C$. Let $D$ denote the treatment assignment, which means that $D=1$ if the corresponding individual is assigned to the treatment group, $D=0$ if the corresponding individual is assigned to the control group. Therefore, $Y=DY^T+(1-D)Y^C$, and we define ATE by
	\begin{equation}
	\text{ATE} = \mathbf{E} Y^T - \mathbf{E} Y^C.       \tag*{}
	\end{equation}
	The basic problem of causal inference is that we can never know $Y^T$ and $Y^C$ for every individual simultaneously, because either of them is observed but never both. Hence we need to speculate on the unobserved potential outcome in order to obtain consistent and efficient estimators of ATE. (Rosenbaum and Rubin, 1983)\cite{RosenbaumRubin1983}
	
	Generally speaking, the treatment design, denoted by $D$, is not randomized and therefore the probabilities $\mathbf{P}(D=1)$ and $\mathbf{P}(D=0)$ may vary among different subgroups of the population. If $D$ is completely randomized, then under Neyman-Rubin model (Neyman, 1923\cite{Neyman1923}; Rubin, 1974\cite{Rubin1974}; Holland, 1986\cite{Holland1986}), we know that $(Y^T, Y^C)$ is unconditionally independent of $D$. Hence we have
	\begin{equation}
	\mathbf{E} Y^T - \mathbf{E} Y^C = \mathbf{E} \lbrack Y^T \vert D=1 \rbrack - \mathbf{E} \lbrack Y^C \vert D=0 \rbrack,      \tag*{}
	\end{equation}
	based on which a large number of consistent estimators of ATE can be then developed and their statistical properties can be examined. The above equality also holds true if covariate vector, say $X$ in the model only contains outcome predictors, i.e., a set of pre-treatment characteristics that are correlated to potential outcomes $Y^T$ and $Y^C$ but independent of treatment assignment $D$. In observational studies, when this assumption is violated, we must assume the unconfoundedness assumption instead in order to reduce bias to obtain consistent estimators of ATE.
	
	\subsection{Assumptions}
	\noindent In most of observational studies, $D$ is not randomized and therefore $(Y^T, Y^C)$ and $D$ are not unconditionally independent, but independent when conditioning on a set of covariates, denoted by $X$ ($X$ may be a random vector), i.e.
	\begin{equation}
	(Y^T, Y^C) \perp D \vert X.          \tag*{}
	\end{equation}
	Here "$\perp$" means independence. This assumption is commonly called the "Unconfoundedness Assumption", which means that there is no unmeasured confounder, i.e., all possible confounders that influence both treatment assignment and potential outcomes are contained in $X$. In Rosenbaum and Rubin (1983)\cite{RosenbaumRubin1983}, a function $b(X)$ of $X$ is defined as a balancing score if $(Y^T, Y^C) \perp D \vert b(X)$. Hence the most trivial balancing score is $X$ itself, and we would like to introduce another balancing score, known as the coarsest function of $X$ to become a balancing score (Rosenbaum and Rubin, 1983)\cite{RosenbaumRubin1983}, i.e., the propensity score $p(X)$.
	
	The propensity score is defined as the probability of an individual to receive treatment conditional on the covariates, i.e.,
	\begin{equation}
	p(X) = \mathbf{P} (D=1 \vert X) = \mathbf{E} \lbrack D \vert X \rbrack,   \tag*{}
	\end{equation}
	and we make the following Overlap Assumption on the propensity scores:
	\begin{equation}
	0 \lt p(X) \lt 1, a.s. \ \text{for} \ X.        \tag*{}
	\end{equation}
	If both Unconfoundedness Assumption and Overlap Assumption are satisfied, then generally we say that treatment assignment is strongly ignorable. Refer to Rosenbaum and Rubin (1983) for more details about this definition.
	\begin{assumption}
		Unconfoundedness Assumption: $(Y^T, Y^C) \perp D \vert X$.
	\end{assumption}
	
	\begin{assumption}
		Overlap Assumption: $0 \lt p(X) \lt 1, a.s.\ \text{for} \ X$.
	\end{assumption}
	\begin{remark}
		The Overlap assumption ensures that the treatment design is non-degenerate, i.e., both the treatment group and the control group is not empty. Some literature impose a stronger version of the Overlap assumption as: for some $\epsilon \gt 0$,
		\begin{equation}
		\epsilon \lt p(X) \lt 1-\epsilon, a.s.\ \text{for} \ X.      \tag*{}
		\end{equation}
	\end{remark}
	
	\subsection{Notation}
	\noindent The principal task of this section is to introduce some notations that will be involved in our future discussions in order to make calculation and proof convenient. These notations will appear in sequel for many times so we list them here for simplicity and brevity.
	\begin{itemize}
		\item $\beta_T(X) = \mathbf{E} \lbrack Y^T \vert X \rbrack$, $\beta_C(X) = \mathbf{E} \lbrack Y^C \vert X \rbrack$;
		
		\item $\beta(X) = \beta_T(X) - \beta_C(X)$;
		
		\item $\sigma_T^2(X) = \mathbf{var} (Y^T \vert X)$, $\sigma_C^2(X) = \mathbf{var} (Y^C \vert X)$;
		
		\item $\beta_T = \mathbf{E} \lbrack \beta_T(X) \rbrack$, $\beta_C = \mathbf{E} \lbrack \beta_C(X) \rbrack$, $p = \mathbf{E} \lbrack p(X) \rbrack$;
		
		\item $\beta = \mathbf{E} Y^T - \mathbf{E} Y^C = \mathbf{E} \lbrack \beta(X) \rbrack$.
	\end{itemize}
	
	\subsection{Covariates}
	\noindent Following the Neyman-Rubin model, the observed outcome $Y$ can be expressed as
	\begin{equation}
	Y = D Y^T + (1-D) Y^C,   \tag*{}
	\end{equation}
	hence covariates $X$ may influence outcome $Y$ through either treatment $D$ or potential outcomes $(Y^T, Y^C)$, or both. And it will be convenient for us to classify these covariates into the following 3 categories:
	\begin{enumerate}
		\item Treatment Predictors, or Exposure Predictors, denoted by $I$ in this paper, are also called Instrumental Variables. Instrumental variables are associated with exposures, but independent of potential outcomes, therefore they can only influence outcome $Y$ through treatment assignment $D$, for the instrumental variables, the following exclusion restriction condition holds:
		\begin{equation}
		(Y^T, Y^C) \perp I;     \tag*{}
		\end{equation}
		
		\item Confounding Factors, also known as Confounders, denoted by $U$ in this paper, are covariates that are associated with both exposure and outcome. Incorrect identification of confounders may violate the unconfoundedness assumption, and therefore cause severe bias of estimators;
		
		\item Predictors of Outcome, but not exposure, are usually denoted by $C$ in this paper. They are pre-treatment attributes of the individual that are unaffected by the treatment design. Its concrete probability characterization will be specified later.
	\end{enumerate}
	Now we denote $X=(I, U, C)$. A critical assumption for instrumental variables $I$ is that they are commonly supposed to be independent of other covariates, i.e., $I \perp (U, C)$. Under this assumption, the causal graph (Pearl, 2009)\cite{Pearl2009} of $D, I, U, C \ \text{and} \ Y$ can be illustrated as follow:
	\begin{equation}
	\xymatrix{
		& & U \\
		I & D & & Y & C
		\ar"2,1";"2,2",
		\ar"2,2";"2,4",
		\ar"2,5";"2,4",
		\ar"1,3";"2,2",
		\ar"1,3";"2,4"
	}                          \tag*{}
	\end{equation}
	\begin{center}
		Figure 1. The Directed Acyclic Graph for $(Y, D, I, U, C)$.
	\end{center}
	Therefore, the joint probability distribution of $(Y, D, I, U, C)$ can be specified as:
	\begin{equation}
	p(Y, D, I, U, C)=p(I)p(U, C)p(D \vert I, U)p(Y \vert D, U, C), \tag*{}
	\end{equation}
	based on the probability representation of Directed Acyclic Graph (Pearl, 2009)\cite{Pearl2009}. And we will first present the following lemma that will play an important role in demonstrating the variance reduction effect of including $C$ into the propensity score model:
	\begin{lemma}
		$C \perp D \vert I, U.$
	\end{lemma}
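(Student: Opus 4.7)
The plan is to read the conditional independence directly off the factorization of the joint law displayed just before the lemma, combined with the standing assumption $I\perp(U,C)$. The factorization $p(Y,D,I,U,C)=p(I)\,p(U,C)\,p(D\mid I,U)\,p(Y\mid D,U,C)$ encodes every edge of the DAG in Figure~1, and the crucial observation is that $C$ does not appear in the factor $p(D\mid I,U)$. So once we marginalize out $Y$, the variable $C$ should drop out of the conditional law of $D$ given $(I,U,C)$. The proof is then just an elementary check by marginalization.

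First, I would integrate (or sum) the joint density over $Y$ to obtain the marginal
\[
p(D,I,U,C)=p(I)\,p(U,C)\,p(D\mid I,U)\int p(Y\mid D,U,C)\,dY=p(I)\,p(U,C)\,p(D\mid I,U),
\]
since the final factor integrates to $1$. Marginalizing further over $D$, and using that $p(D\mid I,U)$ is itself a probability distribution in $D$, yields $p(I,U,C)=p(I)\,p(U,C)$, which is the probabilistic expression of the assumed $I\perp(U,C)$ and thus a useful consistency check. Dividing then gives
\[
p(D\mid I,U,C)=\frac{p(D,I,U,C)}{p(I,U,C)}=\frac{p(I)\,p(U,C)\,p(D\mid I,U)}{p(I)\,p(U,C)}=p(D\mid I,U),
\]
so $p(D\mid I,U,C)$ does not depend on $C$, which is exactly $C\perp D\mid I,U$. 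As a sanity check, I would also mention the d-separation viewpoint: on the DAG every path from $C$ to $D$ must pass through $Y$, but $Y$ is a collider on every such path (since $D\to Y\leftarrow C$), and $Y$ is not in the conditioning set $\{I,U\}$, so every path is blocked and Pearl's criterion delivers the same conclusion in one line.

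The only mild obstacle is notational: $I,U,C,Y$ may be continuous, discrete, or mixed, so one should interpret $p(\cdot)$ throughout as a density with respect to a common dominating product measure and read each manipulation above in that generality; the argument is unaffected. There is no real combinatorial or analytic difficulty here — the lemma is essentially immediate from the DAG factorization — but writing it out carefully is worthwhile because the identity $p(D\mid I,U,C)=p(D\mid I,U)$ is what will later let us treat the propensity score built from $(I,U)$ as a balancing score even after augmenting with the outcome predictor $C$.
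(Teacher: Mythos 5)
Your proof is correct and follows essentially the same route as the paper's: both marginalize $Y$ out of the DAG factorization to get $p(D,I,U,C)=p(I)\,p(U,C)\,p(D\mid I,U)$, invoke $I\perp(U,C)$ to identify $p(I)\,p(U,C)$ with $p(I,U,C)$, and read off the conditional independence (you phrase it as $p(D\mid I,U,C)=p(D\mid I,U)$, the paper as $p(C,D\mid I,U)=p(C\mid I,U)\,p(D\mid I,U)$, which are equivalent).
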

	\begin{proof}
		From the above joint probability decomposition, we know that
		\begin{equation}
		p(C, D, I, U)=p(I)p(U, C)p(D\vert I, U).    \tag*{}
		\end{equation}
		Since $I$ is independent of $(U, C)$, $p(I)p(U, C)=p(I, U, C)$, hence we have
		\begin{equation}
		p(C, D \vert I, U) = p(C \vert I, U) p(D \vert I, U),  \tag*{}
		\end{equation}
		therefore $C$ and $D$ are independent conditional on $(I, U)$.
	\end{proof}
	
	\section{Main Results}
	
	\subsection{Previous Results}
	\noindent A large number of non-parametric or semi-parametric estimators of average treatment effect have been proposed and examined thoroughly in existing literature, see Imbens (2004)\cite{Imbens2004} for a complete review. Most of them are $\sqrt{n}-\text{consistent}$ when the number of individuals in an observational study equals $n$, in the sense that the bias term $\mathbf{E} \hat{\beta} - \beta$ is of order $1/n$ in probability and the scaled error term $\sqrt{n}(\hat{\beta} - \mathbf{E} \hat{\beta})$ is asymptotically normal. Among such a variety of $\sqrt{n}-\text{consistent}$ estimators, some are asymptotically efficient while others are not. Here "asymptotically efficient" refers to reaching the semi-parametric efficiency bound. Newey (1990)\cite{Newey1990} reviewed the definition of "semi-parametric efficiency bound" and developed a systematical methodology to calculate it rigorously. After that, semi-parametric efficiency bound for estimators of average treatment effect in observational study has been obtained in Hahn (1998)\cite{Hahn1998}. Hahn (1998)\cite{Hahn1998} introduced the imputation estimator and proved its asymptotic efficiency when consistent estimators of propensity score function are used as inversed weights. Another asymptotically efficient estimator is the inverse probability weighting (IPW) estimator using estimated propensity scores proposed in Hirano et al. (2003)\cite{Hirano2003}. He suggested to use a Series Logit Estimator (SLE) for the propensity score instead of its true value, even when $p(X)$ itself is known, since using known propensity scores may do harm to its statistical efficiency.
	
	\subsubsection{The Imputation Estimator}
	\noindent The imputation estimator is defined as
	\begin{equation}
	\hat{\beta}_{imp}=\frac{1}{n} \sum_{i=1}^{n} (\frac{\hat{\mathbf{E}} \lbrack D_i Y_i \vert X_i \rbrack}{\hat{\mathbf{E}} \lbrack D_i \vert X_i \rbrack} - \frac{\hat{\mathbf{E}} \lbrack (1-D_i) Y_i \vert X_i \rbrack}{1 - \hat{\mathbf{E}} \lbrack D_i \vert X_i \rbrack}),   \tag*{}
	\end{equation}
	where $\hat{\mathbf{E}} \lbrack D_i Y_i \vert X_i \rbrack$, $\hat{\mathbf{E}} \lbrack (1-D_i) Y_i \vert X_i \rbrack$ and $\hat{\mathbf{E}} \lbrack D_i \vert X_i \rbrack$ denote nonparametric consistent estimators of their corresponding true conditional expectations. Since $\mathbf{E} \lbrack D_i Y_i \vert X_i \rbrack = \mathbf{E} \lbrack D_i \vert X_i \rbrack \mathbf{E} \lbrack Y_i^T \vert X_i \rbrack$ due to our unconfoundedness assumption, we may estimate $\beta_T(X_i) = \mathbf{E} \lbrack Y_i^T \vert X_i \rbrack$ by $\hat{\beta}_T(X_i) = \hat{\mathbf{E}} \lbrack D_i Y_i \vert X_i \rbrack / \hat{\mathbf{E}} \lbrack D_i \vert X_i \rbrack$, and estimate $\beta_C(X_i) = \mathbf{E} \lbrack Y_i^C \vert X_i \rbrack$ by $\hat{\beta}_C(X_i) = \hat{\mathbf{E}} \lbrack (1-D_i) Y_i \vert X_i \rbrack /(1 - \hat{\mathbf{E}} \lbrack D_i \vert X_i \rbrack)$. Note that $\hat{\beta}_T(X_i)$ and $\hat{\beta}_C(X_i)$ are $\sqrt{n}-\text{consistent}$ estimators of $\beta_T(X_i)$ and $\beta_C(X_i)$, respectively, and the fact that
	\begin{equation}
	\text{ATE}=\beta=\mathbf{E} \lbrack \beta_T(X) - \beta_C(X) \rbrack.   \tag*{}
	\end{equation}
	Therefore
	\begin{equation}
	\hat{\beta}_{imp} = \frac{1}{n} \sum_{i=1}^{n} (\hat{\beta}_T(X_i) - \hat{\beta}_C(X_i))      \tag*{}
	\end{equation}
	is $\sqrt{n}-\text{consistent}$ for estimating ATE.
	
	Hahn (1998)\cite{Hahn1998} showed that $\hat{\beta}_{imp}$ is efficient, i.e., asymptotic variance of $\hat{\beta}_{imp}$ is equal to the semi-parametric efficiency bound:
	\begin{equation}
	\mathbf{E} \lbrack \frac{\sigma_T^2(X)}{p(X)} + \frac{\sigma_C^2(X)}{1-p(X)} + (\beta(X)-\beta)^2 \rbrack.   \tag*{}
	\end{equation}      
	It is highlighted that even under the situation that $p(X)$ is known, it is necessary to use its semi-parametric estimators $\hat{p}(X_i)=\hat{\mathbf{E}} \lbrack D_i \vert X_i \rbrack$ instead of itself, since using the known value will do harm to its statistical efficiency. We will illustrate this fact under a rather simple setting, i.e., if the distribution of $X$, has a finite support. Under this circumstance, we may use the following empirical estimators:
	\begin{align*}
	& \hat{\mathbf{E}} \lbrack D_i Y_i \vert X_i=x \rbrack = \frac{\sum_{i=1}^{n} D_i Y_i \mathbf{I}_{\lbrace X_i=x \rbrace}}{\sum_{i=1}^{n} \mathbf{I}_{\lbrace X_i=x \rbrace}},  \\
	& \hat{\mathbf{E}} \lbrack (1-D_i) Y_i \vert X_i=x \rbrack = \frac{\sum_{i=1}^{n} (1-D_i) Y_i \mathbf{I}_{\lbrace X_i=x \rbrace}}{\sum_{i=1}^{n} \mathbf{I}_{\lbrace X_i=x \rbrace}},  \\
	& \hat{\mathbf{E}} \lbrack D_i \vert X_i=x \rbrack = \frac{\sum_{i=1}^{n} D_i \mathbf{I}_{\lbrace X_i=x \rbrace}}{\sum_{i=1}^{n} \mathbf{I}_{\lbrace X_i=x \rbrace}},
	\end{align*}
	and the known propensity score function to define another version of imputation estimator of ATE, i.e., 
	\begin{equation}
	\tilde{\beta}_{imp} = \frac{1}{n} \sum_{i=1}^{n} (\frac{\hat{\mathbf{E}} \lbrack D_i Y_i \vert X_i \rbrack}{p(X_i)} - \frac{\hat{\mathbf{E}} \lbrack (1-D_i) Y_i \vert X_i \rbrack}{1 - p(X_i)}).    \tag*{}
	\end{equation}
	Then we have the following theorem concerning on asymptotic efficiency of $\tilde{\beta}_{imp}$.
	\begin{theorem}
		Under Assumptions 1 and 2, and further assume that the propensity scores are known and the distribution of $X$ has a finite support, then $\sqrt{n} (\tilde{\beta}_{imp} - \beta)$ converges to a normal distribution in law, with mean $0$ and variance equal to
		\begin{equation}
		\mathbf{E} \lbrack \frac{\sigma_T^2(X)+\beta_T(X)^2}{p(X)} + \frac{\sigma_C^2(X)+\beta_C(X)^2}{1-p(X)} \rbrack - \beta^2,   \tag*{}
		\end{equation}
		which exceeds the semi-parametric efficiency bound by the magnitude of
		\begin{equation}
		\mathbf{E} \lbrack (\sqrt{\frac{1-p(X)}{p(X)}}\beta_T(X) + \sqrt{\frac{p(X)}{1-p(X)}}\beta_C(X))^2 \rbrack.   \tag*{}
		\end{equation}
		Hence, $\tilde{\beta}_{imp}$ is not efficient for estimating $\beta$.
	\end{theorem}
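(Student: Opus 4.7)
The plan is to exploit the finite-support hypothesis to collapse $\tilde{\beta}_{imp}$ into an explicit i.i.d.\ average, apply the central limit theorem, and finally reduce the variance excess to a perfect square by algebra.

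First, I would establish the key algebraic simplification that in this finite-support regime $\tilde{\beta}_{imp}$ coincides with the ordinary IPW estimator using the known propensity scores. For each $x$ in the support of $X$ write $n_x = \sum_j \mathbf{I}_{\{X_j=x\}}$; the empirical estimator gives $\hat{\mathbf{E}}[DY\mid X=x] = n_x^{-1}\sum_{j:X_j=x} D_j Y_j$, which is constant in $i$ over the block $\{i:X_i=x\}$. Grouping the outer sum $\frac{1}{n}\sum_i$ by the value of $X_i$ produces a factor $n_x$ that cancels the $n_x^{-1}$, yielding
\[
\tilde{\beta}_{imp} = \frac{1}{n}\sum_{i=1}^{n}\left(\frac{D_i Y_i}{p(X_i)} - \frac{(1-D_i)Y_i}{1-p(X_i)}\right).
\]
On the exceptional event that some $n_x=0$, the undefined estimator $\hat{\mathbf{E}}[DY\mid X=x]$ is never referenced because no term with $X_i=x$ appears in the sum, and the event has probability tending to zero at exponential rate by a standard concentration bound.

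Second, this display is an i.i.d.\ average of $W_i:=D_i Y_i/p(X_i) - (1-D_i)Y_i/(1-p(X_i))$, and finite support plus Assumption 2 forces $p(X)$ to live in a finite subset of $(0,1)$, so all moments are finite. Conditioning on $X_i$, invoking Assumption 1 to decouple $D$ from $(Y^T,Y^C)$, and using $\mathbf{E}[D\mid X]=p(X)$ gives $\mathbf{E}[W_i]=\beta_T-\beta_C=\beta$. For the second moment, the identities $D^2=D$, $(1-D)^2=1-D$, $D(1-D)=0$ eliminate the cross term, and a further conditioning on $X$ combined with $\mathbf{E}[(Y^T)^2\mid X]=\sigma_T^2(X)+\beta_T(X)^2$ and its $Y^C$ analogue yields
\[
\mathbf{E}[W_i^2]=\mathbf{E}\!\left[\frac{\sigma_T^2(X)+\beta_T(X)^2}{p(X)} + \frac{\sigma_C^2(X)+\beta_C(X)^2}{1-p(X)}\right].
\]
The Lindeberg--L\'evy CLT then delivers the claimed asymptotic normality with variance $\mathbf{E}[W_i^2]-\beta^2$.

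The main obstacle, such as it is, lies in matching the excess over the semi-parametric bound to the stated perfect square. Subtracting the bound and using $\mathbf{E}[(\beta(X)-\beta)^2]=\mathbf{E}[\beta(X)^2]-\beta^2$, the $\beta^2$ terms and the $\sigma_T^2/p$, $\sigma_C^2/(1-p)$ pieces cancel exactly, leaving
\[
\mathbf{E}\!\left[\frac{\beta_T(X)^2}{p(X)}+\frac{\beta_C(X)^2}{1-p(X)} - (\beta_T(X)-\beta_C(X))^2\right] = \mathbf{E}\!\left[\frac{1-p(X)}{p(X)}\beta_T(X)^2 + \frac{p(X)}{1-p(X)}\beta_C(X)^2 + 2\beta_T(X)\beta_C(X)\right].
\]
Since $\sqrt{(1-p)/p}\cdot\sqrt{p/(1-p)}=1$, the integrand is precisely $\bigl(\sqrt{(1-p(X))/p(X)}\,\beta_T(X) + \sqrt{p(X)/(1-p(X))}\,\beta_C(X)\bigr)^2$, which is non-negative, establishing the claimed excess formula and in particular the strict inefficiency of $\tilde{\beta}_{imp}$ whenever that square is not almost surely zero.
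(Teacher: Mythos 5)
Your proof is correct, and it follows the same overall architecture as the paper's: reduce $\tilde{\beta}_{imp}$ to the i.i.d.\ average of $W_i = D_iY_i/p(X_i) - (1-D_i)Y_i/(1-p(X_i))$, apply the CLT, and verify the excess over the bound by completing the square. The one place where you genuinely diverge is the first step. The paper linearizes the ratio-form empirical conditional expectations around their targets, carries an $O_P(1/n)$ remainder through the computation, and reads off the influence function $DY/p(X)-(1-D)Y/(1-p(X))$; you instead observe that grouping the outer sum by the value of $X_i$ makes the block count $n_x$ cancel exactly against the $n_x^{-1}$ inside $\hat{\mathbf{E}}[\,\cdot\mid X=x]$, so that $\tilde{\beta}_{imp}$ is \emph{identically} equal to the unnormalized IPW estimator on every sample. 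This is a cleaner and strictly stronger statement (no remainder terms, no linearization), and it makes transparent why the finite-support hypothesis is what drives the result. What the paper's route buys in exchange is a template that generalizes to settings where the identity is not exact (e.g.\ smoothed nonparametric estimators of the conditional expectations), where one must argue via influence functions anyway. Two minor points: your remark that "all moments are finite" should be read as saying the weights $1/p(X)$, $1/(1-p(X))$ are bounded (finite support plus overlap), so the CLT needs only the implicitly assumed finiteness of $\mathbf{E}[(Y^T)^2]$ and $\mathbf{E}[(Y^C)^2]$; and the "exceptional event" $n_x=0$ is harmless exactly as you say, since the corresponding undefined quantity never enters the sum. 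Neither affects correctness.
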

	(Proof in Appendix)
	
	\subsubsection{The IPW Estimator}
	\noindent The inverse probability weighting (IPW) estimator uses the reciprocal of propensity scores as weights to balance the distribution of potential outcomes between treatment group and control group, as its name suggests. It is formally defined as
	\begin{equation}
	\hat{\beta}_{ipw} = (1 / \sum_{i=1}^{n} \frac{D_i}{\hat{p}(X_i)}) \sum_{i=1}^{n} \frac{D_i Y_i}{\hat{p}(X_i)} - (1 / \sum_{i=1}^{n} \frac{1-D_i}{1-\hat{p}(X_i)}) \sum_{i=1}^{n} \frac{(1 - D_i) Y_i}{1 - \hat{p}(X_i)}.  \tag*{}
	\end{equation}
	Here we use a non-parametric estimator $\hat{p}(X)$ for $p(X)$, as discussed in Hirano et al. (2003)\cite{Hirano2003}. It can be shown that $\hat{\beta}_{ipw}$ is asymptotically efficient, with its asymptotic variance equal to the semi-parametric efficiency bound aforementioned.
	
	The situation is similar to the imputation estimator, and it is surprising that even when the propensity scores are known, using the true propensity scores the IPW estimator would not be fully efficient, and thus not very attractive in practical use (Hirano et al., 2003)\cite{Hirano2003}. For IPW estimator, knowing propensity score helps nothing in estimation and may be harmful sometimes. Such an estimator using known propensity scores, denoted by $\tilde{\beta}_{ipw}$, can be directly calculated from observational data after the weights are normalized to unity, that is:
	\begin{equation}
	\tilde{\beta}_{ipw} = (1 / \sum_{i=1}^{n} \frac{D_i}{p(X_i)}) \sum_{i=1}^{n} \frac{D_i Y_i}{p(X_i)} - (1 / \sum_{i=1}^{n} \frac{1-D_i}{1-p(X_i)}) \sum_{i=1}^{n} \frac{(1 - D_i) Y_i}{1 - p(X_i)}.  \tag*{}
	\end{equation}
	By usual argument we may figure out asymptotic distribution of $\tilde{\beta}_{ipw}$.
	\begin{theorem}
		Under Assumptions 1 and 2, and further assume that the propensity scores are known, then $\sqrt{n} (\tilde{\beta}_{ipw} - \beta)$ converges to a normal distribution in law, with mean $0$ and variance equal to
		\begin{equation}
		\mathbf{E} \lbrack \frac{\sigma_T^2(X)+(\beta_T(X)-\beta_T)^2}{p(X)} + \frac{\sigma_C^2(X)+(\beta_C(X)-\beta_C)^2}{1-p(X)} \rbrack,   \tag*{}
		\end{equation}
		which exceeds the semi-parametric efficiency bound by the magnitude of
		\begin{equation}
		\mathbf{E} \lbrack (\sqrt{\frac{1-p(X)}{p(X)}}(\beta_T(X)-\beta_T) + \sqrt{\frac{p(X)}{1-p(X)}}(\beta_C(X)-\beta_C))^2 \rbrack.   \tag*{}
		\end{equation}
		Hence, $\tilde{\beta}_{ipw}$ is not efficient for estimating $\beta$.
	\end{theorem}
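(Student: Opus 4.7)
The plan is to treat $\tilde{\beta}_{ipw}$ as a difference of two self-normalized sample means and apply a delta-method (linearization) argument, then read off the asymptotic variance. Write
\begin{equation*}
A_n^T = \frac{1}{n}\sum_{i=1}^n \frac{D_i Y_i}{p(X_i)}, \qquad B_n^T = \frac{1}{n}\sum_{i=1}^n \frac{D_i}{p(X_i)},
\end{equation*}
and analogously $A_n^C, B_n^C$ with $D_i$ replaced by $1-D_i$ and $p(X_i)$ by $1-p(X_i)$. Under Assumptions 1 and 2, iterated expectation gives $\mathbf{E}[D_iY_i/p(X_i)] = \mathbf{E}[\beta_T(X)] = \beta_T$ and $\mathbf{E}[D_i/p(X_i)] = 1$, and similarly on the control side. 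So the law of large numbers yields $B_n^T, B_n^C \to 1$ and $A_n^T/B_n^T \to \beta_T$, $A_n^C/B_n^C \to \beta_C$ in probability; in particular $\tilde{\beta}_{ipw}$ is consistent for $\beta$.

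For asymptotic normality I would linearize each ratio around $(\beta_T, 1)$ (respectively $(\beta_C, 1)$): a first-order Taylor expansion gives
\begin{equation*}
\frac{A_n^T}{B_n^T} - \beta_T = \frac{1}{n}\sum_{i=1}^n \frac{D_i(Y_i - \beta_T)}{p(X_i)} \cdot \frac{1}{B_n^T} + o_p(n^{-1/2}),
\end{equation*}
and since $B_n^T = 1 + o_p(1)$, this collapses to the sample mean of $\psi_i^T := D_i(Y_i-\beta_T)/p(X_i)$ up to $o_p(n^{-1/2})$. The analogous statement holds on the control side with $\psi_i^C := (1-D_i)(Y_i-\beta_C)/(1-p(X_i))$. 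Then $\sqrt{n}(\tilde{\beta}_{ipw} - \beta) = n^{-1/2}\sum_{i=1}^n (\psi_i^T - \psi_i^C) + o_p(1)$, and the central limit theorem gives a centered Gaussian limit. A key simplification is that $D_i(1-D_i)\equiv 0$, so $\psi_i^T \psi_i^C = 0$ identically, and since both have mean zero their contributions decouple.

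The variance computation is then routine conditioning on $X$. Using unconfoundedness, $\mathbf{E}[\psi_T^2] = \mathbf{E}\bigl[\mathbf{E}[(Y^T-\beta_T)^2 \mid X]/p(X)\bigr]$, and $\mathbf{E}[(Y^T-\beta_T)^2 \mid X] = \sigma_T^2(X) + (\beta_T(X)-\beta_T)^2$; the symmetric expression on the control side gives the variance formula stated in the theorem. Finally, to show inefficiency I would verify the stated gap algebraically: writing $a = \beta_T(X)-\beta_T$, $b = \beta_C(X)-\beta_C$, $p=p(X)$, and using $\beta(X)-\beta = a - b$, one checks
\begin{equation*}
\frac{a^2}{p} + \frac{b^2}{1-p} - (a-b)^2 = \frac{(1-p)}{p}a^2 + \frac{p}{1-p}b^2 + 2ab = \left(\sqrt{\tfrac{1-p}{p}}\,a + \sqrt{\tfrac{p}{1-p}}\,b\right)^2,
\end{equation*}
which taking expectations gives exactly the claimed excess over the semi-parametric bound.

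The only delicate step is justifying the linearization uniformly: one must argue that $1/B_n^T = 1 + O_p(n^{-1/2})$ and that $A_n^T - \beta_T B_n^T = O_p(n^{-1/2})$, so that the product of the leading error and the $(1/B_n^T - 1)$ correction is $o_p(n^{-1/2})$. Under the Overlap Assumption the summands $\psi_i^T$ and $D_i/p(X_i)$ have finite variance (assuming the standard moment conditions on $Y^T, Y^C$), so the CLT and Slutsky's theorem apply without further difficulty; this is the main technical point, but it is standard.
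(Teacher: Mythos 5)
Your proposal is correct and follows essentially the same route as the paper: linearizing the two self-normalized ratios around $(\beta_T,1)$ and $(\beta_C,1)$ yields exactly the influence function $D(Y-\beta_T)/p(X)-(1-D)(Y-\beta_C)/(1-p(X))$ that the paper derives. You in fact supply more detail than the paper does, since the paper stops at the influence function and leaves the conditional-variance computation and the algebraic identity for the excess over the semi-parametric bound to the reader.
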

	(Proof in Appendix)
	\begin{remark}
		Based on Theorem 1 and Theorem 2, we know that both $\tilde{\beta}_{imp}$ and $\tilde{\beta}_{ipw}$ are not efficient for estimating $\beta$. Therefore for these two estimators, knowing propensity score function does not help improve efficiency of estimators of $\beta$, and efficiently estimating $\beta$ requires using semi-parametric estimators of propensity scores under some regularity conditions.
	\end{remark}
	
	\subsection{Variance Reduction Effect}
	\noindent In this section, we consider the effect of including additional outcome predictors into the covariates. To be specific, we compare the efficiency of estimators of ATE when using $(I, U)$ as covariates with efficiency of estimators of ATE when using $X=(I, U, C)$ as covariates, here $C$ may be viewed as newly incorporated outcome predictors. First we show that the unconfoundedness assumption holds also for $(I, U)$, since there is no unmeasured confounder in $(I, U)$. We summarize this result in the following Lemma.
	\begin{lemma}
		$(Y^T, Y^C) \perp D \vert I, U$.
	\end{lemma}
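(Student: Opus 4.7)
The plan is to combine Assumption 1 (the unconfoundedness of $D$ given the full covariate vector $X=(I,U,C)$) with Lemma 1 ($C \perp D \mid I,U$) and ``integrate out'' $C$. Intuitively, Assumption 1 removes the dependence between the potential outcomes and $D$ once we condition on all of $(I,U,C)$, and Lemma 1 says that the extra conditioning on $C$ is redundant for $D$; so conditioning only on $(I,U)$ should already suffice.

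First I would write the conditional density of $(Y^T,Y^C)$ given $(D,I,U)$ as a marginalization over $C$:
\begin{equation*}
p(Y^T,Y^C \mid D,I,U)=\int p(Y^T,Y^C \mid D,I,U,C)\,p(C\mid D,I,U)\,dC.
\end{equation*}
Then I would apply Assumption 1 to the first factor in the integrand to remove the dependence on $D$, obtaining $p(Y^T,Y^C \mid D,I,U,C)=p(Y^T,Y^C \mid I,U,C)$; and I would apply Lemma 1 to the second factor to obtain $p(C\mid D,I,U)=p(C\mid I,U)$. Substituting back and recognizing the result as $p(Y^T,Y^C \mid I,U)$ yields the desired conditional independence.

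Equivalently, and perhaps cleaner to record in the paper, one can phrase the same argument via the graphoid contraction and decomposition rules: Assumption 1 gives $(Y^T,Y^C)\perp D \mid (I,U),C$, Lemma 1 gives $C \perp D \mid (I,U)$, so contraction yields $\bigl((Y^T,Y^C),C\bigr)\perp D \mid (I,U)$, and decomposition then yields $(Y^T,Y^C)\perp D \mid (I,U)$.

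The only real subtlety is that this argument implicitly treats $p(C\mid I,U)$ as a well-defined regular conditional distribution, which is fine under standard measurability assumptions on $(I,U,C)$; I would simply note this in passing. The DAG-based d-separation argument (noting that every path from $(Y^T,Y^C)$ to $D$ either goes through $U$ or through $I$, both of which are in the conditioning set, and $C$ is not a collider on any such path) could be mentioned as an alternative viewpoint, but the density-level calculation above is the shortest rigorous route and is what I would present.
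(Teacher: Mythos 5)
Your proof is correct and follows essentially the same route as the paper's: both combine Assumption 1 with Lemma 1 and integrate out $C$, the only cosmetic difference being that you marginalize at the level of conditional densities while the paper factorizes the joint density and then integrates. The graphoid contraction--decomposition phrasing you mention is an equivalent restatement of the same argument.
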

	\begin{proof}
		Since $(Y^T, Y^C) \perp D \vert I, U, C$ and $C \perp D \vert I, U$, we know that
		\begin{align*}
		p(Y^T, Y^C, D, I, U, C) = & \frac{p(Y^T, Y^C, I, U, C)p(D, I, U, C)}{p(I, U, C)}  \\
		= & \frac{p(Y^T, Y^C, I, U, C)p(D, I, U)}{p(I, U)}.
		\end{align*}
		Take integration with respect to $C$ on both sides, we find that
		\begin{equation}
		p(Y^T, Y^C, D, I, U) = \frac{p(Y^T, Y^C, I, U)p(D, I, U)}{p(I, U)},     \tag*{}
		\end{equation}
		therefore $(Y^T, Y^C) \perp D \vert I, U$. End of proof.
	\end{proof}
	Now since $p(I, U) = p(X)$, the Overlap assumption also holds for $(I, U)$, to facilitate further discussions we denote $X_0=(I, U)$. Then we investigate the variance reduction effect of including $C$ into $X$, compared with the situation in which we merely use $X_0$ in propensity score matching. To begin with, we first introduce two useful lemmas.
	\begin{lemma}
		Let $f$ and $g$ be integrable random variables, then we have
		\begin{equation}
		\mathbf{E} \lbrack \frac{\mathbf{E} \lbrack f \vert X_0 \rbrack}{p(X_0)} + \frac{\mathbf{E} \lbrack g \vert X_0 \rbrack}{1-p(X_0)} \rbrack =
		\mathbf{E} \lbrack \frac{\mathbf{E} \lbrack f \vert X \rbrack}{p(X)} + \frac{\mathbf{E} \lbrack g \vert X \rbrack}{1-p(X)} \rbrack.   \tag*{}
		\end{equation}
	\end{lemma}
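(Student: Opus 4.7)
The plan is to reduce the claim to the tower property of conditional expectation, after first observing that the propensity score is unchanged when we append the pure outcome predictor $C$ to the covariate vector. Concretely, I would prove the identity in two short steps.

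\textbf{Step 1: Show $p(X) = p(X_0)$ almost surely.} By Lemma 1, $C \perp D \mid I, U$, i.e.\ $C \perp D \mid X_0$. Therefore
\begin{equation*}
p(X) = \mathbf{E}[D \mid X_0, C] = \mathbf{E}[D \mid X_0] = p(X_0) \qquad \text{a.s.}
\end{equation*}
This is the crucial structural fact: since $C$ is, by construction, unrelated to the treatment assignment conditional on $(I,U)$, conditioning further on $C$ cannot change the propensity score.

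\textbf{Step 2: Apply the tower property.} Because $p(X_0)$ is $\sigma(X_0)$-measurable and $X_0$ is a sub-$\sigma$-algebra of $X$, for any integrable $f$ the tower property gives
\begin{equation*}
\mathbf{E}\!\left[\frac{\mathbf{E}[f \mid X]}{p(X)}\right] = \mathbf{E}\!\left[\frac{\mathbf{E}[f \mid X]}{p(X_0)}\right] = \mathbf{E}\!\left[\frac{\mathbf{E}\big[\mathbf{E}[f \mid X] \,\big|\, X_0\big]}{p(X_0)}\right] = \mathbf{E}\!\left[\frac{\mathbf{E}[f \mid X_0]}{p(X_0)}\right],
\end{equation*}
using Step 1 in the first equality and pulling $1/p(X_0)$ out of the inner conditional expectation in the second. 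The analogous identity holds for $g$ with $1/(1-p(X_0))$ in place of $1/p(X_0)$ (note $1-p(X) = 1-p(X_0)$ as well). Summing the two identities yields the claim.

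The proof is essentially one line once Step 1 is in hand; the only real content is the reduction $p(X)=p(X_0)$, which uses Lemma 1 in an essential way. There is no genuine obstacle, but one should be careful to invoke the Overlap Assumption implicitly so that division by $p(X_0)$ and $1-p(X_0)$ is well-defined almost surely, and to note that since $\mathbf{E}[f\mid X]$ is integrable and $p(X_0)$ is bounded away from $0$ and $1$ (under the strengthened version of Overlap in Remark 1, or almost surely under the stated version), all the conditional expectations above are finite and Fubini-type manipulations are legitimate.
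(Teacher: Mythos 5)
Your proof is correct and follows essentially the same route as the paper: establish $p(X)=p(X_0)$ (which the paper asserts just before the lemma and you derive explicitly from Lemma 1), then apply the tower property and pull the $\sigma(X_0)$-measurable weights out of the inner conditional expectation. Your Step 1 is a slightly more careful justification of the identity $p(X)=p(X_0)$ than the paper gives, but the argument is otherwise identical.
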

	(Proof in Appendix)
	\begin{lemma}
		Let $f$ and $g$ be integrable random variables, then we have
		\begin{equation}
		\mathbf{E} \lbrack (\sqrt{\frac{1-p(X_0)}{p(X_0)}} \mathbf{E} \lbrack f \vert X_0 \rbrack + \sqrt{\frac{p(X_0)}{1-p(X_0)}} \mathbf{E} \lbrack g \vert X_0 \rbrack)^2 \rbrack \leq
		\mathbf{E} \lbrack (\sqrt{\frac{1-p(X)}{p(X)}} \mathbf{E} \lbrack f \vert X \rbrack + \sqrt{\frac{p(X)}{1-p(X)}} \mathbf{E} \lbrack g \vert X \rbrack)^2 \rbrack.   \tag*{}
		\end{equation}
	\end{lemma}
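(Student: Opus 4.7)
The plan is to reduce the inequality to a single application of Jensen's inequality for conditional expectations, once we exploit the key structural fact that the propensity score itself is unchanged when we drop $C$.

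First, I would observe that by Lemma 1 we have $C \perp D \mid I,U$, hence
\[
p(X) \;=\; \mathbf{P}(D=1 \mid I,U,C) \;=\; \mathbf{P}(D=1 \mid I,U) \;=\; p(X_0).
\]
This is the crux: both sides of the claimed inequality use the \emph{same} random weight $p_0 := p(X_0) = p(X)$, which is measurable with respect to the coarser $\sigma$-algebra generated by $X_0$.

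Next, since $p_0$ is $X_0$-measurable (hence also $X$-measurable), I can pull the square-root factors inside the conditional expectations and write
\[
\sqrt{\tfrac{1-p_0}{p_0}}\,\mathbf{E}[f\mid X_0] + \sqrt{\tfrac{p_0}{1-p_0}}\,\mathbf{E}[g\mid X_0] \;=\; \mathbf{E}\!\left[h \,\middle|\, X_0\right],
\]
\[
\sqrt{\tfrac{1-p_0}{p_0}}\,\mathbf{E}[f\mid X] + \sqrt{\tfrac{p_0}{1-p_0}}\,\mathbf{E}[g\mid X] \;=\; \mathbf{E}\!\left[h \,\middle|\, X\right],
\]
where $h := \sqrt{(1-p_0)/p_0}\,f + \sqrt{p_0/(1-p_0)}\,g$ is a single random variable. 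Thus the inequality to be proved reduces to
\[
\mathbf{E}\!\left[(\mathbf{E}[h\mid X_0])^2\right] \;\leq\; \mathbf{E}\!\left[(\mathbf{E}[h\mid X])^2\right].
\]

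Finally, since the $\sigma$-algebra generated by $X_0$ is contained in that generated by $X = (I,U,C)$, the tower property gives $\mathbf{E}[h\mid X_0] = \mathbf{E}[\mathbf{E}[h\mid X] \mid X_0]$, and then Jensen's inequality applied to the convex map $t\mapsto t^2$ yields
\[
\bigl(\mathbf{E}[\mathbf{E}[h\mid X] \mid X_0]\bigr)^2 \;\leq\; \mathbf{E}\!\left[(\mathbf{E}[h\mid X])^2 \,\middle|\, X_0\right],
\]
and taking outer expectations closes the argument. The only non-routine step is recognizing $p(X)=p(X_0)$ via Lemma 1; without that, the two square-root weights would differ and the tidy Jensen reduction would not apply, so that identification is the main conceptual hurdle, while the rest is standard manipulation of conditional expectations.
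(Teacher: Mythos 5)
Your proposal is correct and follows essentially the same route as the paper's proof: both hinge on the identification $p(X)=p(X_0)$ (so the weights are $X_0$-measurable and can be pulled inside the conditional expectations) followed by the tower property and Jensen's inequality for the square applied to $\mathbf{E}[\,\cdot\mid X_0]$. Your packaging of the integrand as a single variable $h$ is a slightly cleaner presentation of the identical argument, so there is nothing substantive to add.
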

	(Proof in Appendix)
	
	Now we first show that, for the imputation estimator and IPW estimator using known propensity scores that are not efficient, i.e., $\tilde{\beta}_{imp}$ and $\tilde{\beta}_{ipw}$, including $C$ into $X$ neither increase or decrease their asymptotic variances. See the following theorem.
	\begin{theorem}
		Using the same notations as before, then the asymptotic variances of $\tilde{\beta}_{imp}$ and $\tilde{\beta}_{ipw}$ both keep invariant after including $C$ in $X$, i.e.,
		\begin{equation}
		\mathbf{E} \lbrack \frac{\sigma_T^2(X_0)+\beta_T(X_0)^2}{p(X_0)} + \frac{\sigma_C^2(X_0)+\beta_C(X_0)^2}{1-p(X_0)} \rbrack = \mathbf{E} \lbrack \frac{\sigma_T^2(X)+\beta_T(X)^2}{p(X)} + \frac{\sigma_C^2(X)+\beta_C(X)^2}{1-p(X)} \rbrack, \tag*{}
		\end{equation}
		and
		\begin{align*}
		& \mathbf{E} \lbrack \frac{\sigma_T^2(X_0)+(\beta_T(X_0)-\beta_T)^2}{p(X_0)} + \frac{\sigma_C^2(X_0)+(\beta_C(X_0)-\beta_C)^2}{1-p(X_0)} \rbrack   \\
		= & \mathbf{E} \lbrack \frac{\sigma_T^2(X)+(\beta_T(X)-\beta_T)^2}{p(X)} + \frac{\sigma_C^2(X)+(\beta_C(X)-\beta_C)^2}{1-p(X)} \rbrack.
		\end{align*}
	\end{theorem}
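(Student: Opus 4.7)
The plan is to reduce both equalities to direct applications of Lemma 3 by rewriting the non-$p(X)$ parts of each integrand as a conditional expectation of a single non-negative random variable.

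First, for the imputation variance, I would observe the elementary identity
\begin{equation*}
\sigma_T^2(X) + \beta_T(X)^2 = \mathbf{E}[(Y^T)^2 \vert X], \qquad \sigma_C^2(X) + \beta_C(X)^2 = \mathbf{E}[(Y^C)^2 \vert X],
\end{equation*}
and the analogous identities with $X_0$ in place of $X$; these are just the conditional version of $\mathbf{var} + (\mathbf{E})^2 = \mathbf{E}[(\cdot)^2]$, which holds for conditioning on any sub-$\sigma$-algebra and does not require unconfoundedness. Substituting these into the first equality transforms it into
\begin{equation*}
\mathbf{E}\!\left[ \frac{\mathbf{E}[(Y^T)^2 \vert X_0]}{p(X_0)} + \frac{\mathbf{E}[(Y^C)^2 \vert X_0]}{1-p(X_0)} \right] = \mathbf{E}\!\left[ \frac{\mathbf{E}[(Y^T)^2 \vert X]}{p(X)} + \frac{\mathbf{E}[(Y^C)^2 \vert X]}{1-p(X)} \right],
\end{equation*}
which is exactly Lemma 3 applied with $f = (Y^T)^2$ and $g = (Y^C)^2$.

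For the IPW variance, I would use the shifted version of the same identity: since $\beta_T, \beta_C$ are deterministic constants, expanding the square gives
\begin{equation*}
\mathbf{E}[(Y^T - \beta_T)^2 \vert X] = \sigma_T^2(X) + (\beta_T(X) - \beta_T)^2,
\end{equation*}
and likewise for the control arm and for the coarser conditioning on $X_0$. The second equality then becomes Lemma 3 applied to $f = (Y^T - \beta_T)^2$ and $g = (Y^C - \beta_C)^2$, so there is nothing more to verify.

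There is no real obstacle here: the whole content is in spotting the two ``variance + squared bias'' identities that repackage each integrand as $\mathbf{E}[\,\cdot\,\vert X]/p(X) + \mathbf{E}[\,\cdot\,\vert X]/(1-p(X))$. Once that is done, Lemma 3 (which itself relied on the tower property together with $p(X_0) = \mathbf{E}[p(X) \vert X_0]$) closes both equalities simultaneously, with no need to invoke Lemma 1 or Lemma 2 or to compute anything explicitly.
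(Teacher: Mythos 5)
Your proposal is correct and follows essentially the same route as the paper: rewrite $\sigma_T^2(X)+\beta_T(X)^2$ as $\mathbf{E}[(Y^T)^2\vert X]$ and $\sigma_T^2(X)+(\beta_T(X)-\beta_T)^2$ as $\mathbf{E}[(Y^T-\beta_T)^2\vert X]$ (likewise for the control arm), then apply Lemma 3 to each pair. Nothing further is needed.
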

	\begin{proof}
		We only need to note that since $\beta_T(X) = \mathbf{E} \lbrack Y^T \vert X \rbrack$, and $\sigma_T^2(X) + \beta_T(X)^2 = \mathbf{E} \lbrack (Y^T)^2 \vert X \rbrack$, and the similar holds for the control group, then we have
		\begin{equation}
		\mathbf{E} \lbrack \frac{\sigma_T^2(X)+\beta_T(X)^2}{p(X)} + \frac{\sigma_C^2(X)+\beta_C(X)^2}{1-p(X)} \rbrack = \mathbf{E} \lbrack \frac{\mathbf{E} \lbrack (Y^T)^2 \vert X \rbrack}{p(X)} + \frac{\mathbf{E} \lbrack (Y^C)^2 \vert X \rbrack}{1-p(X)} \rbrack,      \tag*{}
		\end{equation}
		then using Lemma 3 yields the first identity. Now still with the aid of Lemma 3, we show the second equality by directly calculation:
		\begin{equation}
		\mathbf{E} \lbrack \frac{\sigma_T^2(X)+(\beta_T(X)-\beta_T)^2}{p(X)} + \frac{\sigma_C^2(X)+(\beta_C(X)-\beta_C)^2}{1-p(X)} \rbrack = \mathbf{E} \lbrack \frac{\mathbf{E} \lbrack (Y^T-\beta_T)^2 \vert X \rbrack}{p(X)} + \frac{\mathbf{E} \lbrack (Y^C-\beta_C)^2 \vert X \rbrack}{1-p(X)} \rbrack.  \tag*{}
		\end{equation}
		End of proof.
	\end{proof}
	In the meantime, we claim that, incorporating $C$ into $X$ can always help lower the semi-parametric efficiency bound, hence improve statistical efficiency of the efficient estimators, $\hat{\beta}_{imp}$ and $\hat{\beta}_{ipw}$.
	\begin{theorem}
		Using the same notations as before, then the asymptotic efficiency bound becomes smaller after including $C$ in $X$, i.e.,
		\begin{equation}
		\mathbf{E} \lbrack \frac{\sigma_T^2(X_0)}{p(X_0)} + \frac{\sigma_C^2(X_0)}{1-p(X_0)} + (\beta(X_0)-\beta)^2 \rbrack \geq \mathbf{E} \lbrack \frac{\sigma_T^2(X)}{p(X)} + \frac{\sigma_C^2(X)}{1-p(X)} + (\beta(X)-\beta)^2 \rbrack.   \tag*{}
		\end{equation}
	\end{theorem}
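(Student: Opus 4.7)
The plan is to reduce both sides to expectations conditional on $X_0$, exploiting the fact that $p(X) = p(X_0)$ almost surely. This last identity is a direct consequence of Lemma 1: since $C \perp D \vert I, U$, we have $p(X) = \mathbf{P}(D=1 \vert I, U, C) = \mathbf{P}(D=1 \vert I, U) = p(X_0)$. Consequently the weights $1/p(X)$ and $1/(1-p(X))$ are $X_0$-measurable and agree with $1/p(X_0)$ and $1/(1-p(X_0))$, which is exactly what is needed to push conditional expectations around freely.

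Next I would apply the law of total variance given $X_0$ to each squared-bias/variance piece. Because $\beta_T(X_0) = \mathbf{E}[\beta_T(X) \vert X_0]$ and $(Y^T)^2$ has identical unconditional expectation after conditioning first on $X$ or on $X_0$, one obtains the identity $\sigma_T^2(X_0) = \mathbf{E}[\sigma_T^2(X) \vert X_0] + \mathbf{Var}(\beta_T(X) \vert X_0)$, and analogously for $Y^C$. Since $1/p(X_0)$ is $X_0$-measurable, the tower property gives
\begin{equation*}
\mathbf{E}\!\left[\frac{\sigma_T^2(X_0)}{p(X_0)}\right] - \mathbf{E}\!\left[\frac{\sigma_T^2(X)}{p(X)}\right] = \mathbf{E}\!\left[\frac{\mathbf{Var}(\beta_T(X)\vert X_0)}{p(X_0)}\right],
\end{equation*}
and a parallel formula for the control-group term. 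For the squared-bias part, the tower identity $\beta(X_0) = \mathbf{E}[\beta(X) \vert X_0]$ and total variance give $\mathbf{E}[(\beta(X)-\beta)^2] - \mathbf{E}[(\beta(X_0)-\beta)^2] = \mathbf{E}[\mathbf{Var}(\beta(X) \vert X_0)]$, which contributes with the \emph{opposite} sign in LHS$-$RHS, so it must be dominated by the first two terms.

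Writing $V_T = \mathbf{Var}(\beta_T(X) \vert X_0)$, $V_C = \mathbf{Var}(\beta_C(X) \vert X_0)$, and $K = \mathbf{Cov}(\beta_T(X), \beta_C(X) \vert X_0)$, and expanding $\mathbf{Var}(\beta(X)\vert X_0) = V_T + V_C - 2K$, the gap LHS $-$ RHS becomes
\begin{equation*}
\mathbf{E}\!\left[V_T \cdot \tfrac{1-p(X_0)}{p(X_0)} + V_C \cdot \tfrac{p(X_0)}{1-p(X_0)} + 2K\right].
\end{equation*}
Setting $a = \sqrt{V_T(1-p(X_0))/p(X_0)}$ and $b = \sqrt{V_C\, p(X_0)/(1-p(X_0))}$, one checks that $ab = \sqrt{V_T V_C}$, and the conditional Cauchy--Schwarz inequality gives $K \geq -\sqrt{V_T V_C} = -ab$, so the integrand is at least $a^2 + b^2 - 2ab = (a-b)^2 \geq 0$. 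The main (and only) nontrivial step is this last Cauchy--Schwarz bound: the potentially negative cross-term $2K$ is precisely what prevents a term-by-term comparison, and handling it requires recognizing that the weighted-variance coefficients $(1-p(X_0))/p(X_0)$ and $p(X_0)/(1-p(X_0))$ are exactly reciprocal so that $ab$ simplifies cleanly to $\sqrt{V_T V_C}$.
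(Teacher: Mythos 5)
Your proof is correct, but it takes a genuinely different route from the paper's. The paper argues indirectly: by Theorem 1 (resp.\ Theorem 2), the asymptotic variance of $\tilde{\beta}_{imp}$ equals the efficiency bound plus an explicit gap term; Theorem 3 (via Lemma 3) shows that asymptotic variance is unchanged when $C$ is added, and Lemma 4 (a conditional Jensen argument) shows the gap term can only grow --- hence the bound must shrink. You instead compute the difference of the two bounds directly: the identity $p(X)=p(X_0)$ plus the law of total variance reduces the gap to $\mathbf{E}\bigl[V_T\tfrac{1-p(X_0)}{p(X_0)}+V_C\tfrac{p(X_0)}{1-p(X_0)}+2K\bigr]$, which you bound below by $\mathbf{E}[(a-b)^2]\geq 0$ via conditional Cauchy--Schwarz. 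Your decomposition is in fact the same quantity that appears as the Jensen slack in the paper's Lemma 4 with $f=Y^T$, $g=Y^C$: your integrand is exactly $\mathbf{Var}\bigl(\sqrt{\tfrac{1-p(X_0)}{p(X_0)}}\beta_T(X)+\sqrt{\tfrac{p(X_0)}{1-p(X_0)}}\beta_C(X)\,\big\vert\,X_0\bigr)$, so you could skip Cauchy--Schwarz entirely by recognizing it as a conditional variance. What your approach buys is self-containedness and an explicit formula for the magnitude of the variance reduction (useful for seeing \emph{when} the inequality is strict, namely when $\beta_T(X)$ and $\beta_C(X)$ are not perfectly negatively ``aligned'' given $X_0$ after reweighting); what the paper's approach buys is reuse of machinery (Lemmas 3 and 4, Theorems 1--3) that it needs anyway for the companion results about $\tilde{\beta}_{imp}$ and $\tilde{\beta}_{ipw}$.
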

	\begin{proof}
		According to Theorem 1, if we choose $X_0$ as the covariates to be included into propensity score matching, the difference between asymptotic variance of $\tilde{\beta}_{imp}$ and the asymptotic efficiency bound is
		\begin{equation}
		\mathbf{E} \lbrack (\sqrt{\frac{1-p(X_0)}{p(X_0)}}\beta_T(X_0) + \sqrt{\frac{p(X_0)}{1-p(X_0)}}\beta_C(X_0))^2 \rbrack, \tag*{}
		\end{equation}
		which will increase after we additionally incorporate $C$ into $X$ due to the conclusion of Lemma 4, since asymptotic variance of $\tilde{\beta}_{imp}$ will keep invariant based on Theorem 3, we know that the asymptotic efficiency bound will decrease, end of proof.
	\end{proof}
	\begin{remark}
		It is worth noting that the variance reduction effect of incorporating outcome predictors into the covariates only appears in obtaining asymptotically efficient estimators. For $\tilde{\beta}_{imp}$ and $\tilde{\beta}_{ipw}$, outcome predictors have no impact on their statistical efficiency. After all, at least they are not harmful to efficient estimation of ATE under unconfoundedness assumption.
	\end{remark}
	
	\subsection{Variance Inflation Effect}
	\noindent In this section, we deal with instrumental variables that are independent of potential outcomes conditional on the confounders and outcome predictors, i.e.,
	\begin{equation}
	(Y^T, Y^C) \perp I \vert U, C,   \tag*{}
	\end{equation}
	and we will show that including such kind of instrumental variables into the propensity score function will cause variance inflation for the four imputation estimators and IPW estimators aforementioned. Here, $I$ may be refer to the instrumental variables newly incorporated into $(U, C)$, and we aim to illustrate the variance inflation effect caused by additionally including $I$. And we first show that, the unconfoundedness assumption holds for $(U, C)$ as well since $(U, C)$ contains no unmeasured confounders.
	\begin{lemma}
		If we assume that $(Y^T, Y^C) \perp I \vert U, C$, then $(Y^T, Y^C) \perp D \vert U, C$.
	\end{lemma}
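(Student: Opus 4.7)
The plan is to mirror the proof of Lemma 2, extending the DAG factorization to include the potential outcomes and then integrating out $I$. Concretely, I would first assert the joint decomposition
\begin{equation*}
p(Y^T, Y^C, D, I, U, C) = p(I)\, p(U, C)\, p(D \mid I, U)\, p(Y^T, Y^C \mid I, U, C),
\end{equation*}
which is the natural analogue of the factorization already displayed for $(Y, D, I, U, C)$, with $(Y^T, Y^C)$ playing the role of a pre-treatment node whose causal parents lie in $(I, U, C)$. Under the standing assumption $I \perp (U, C)$, the instrument $I$ still has no arrow to $(U, C)$, and under the new hypothesis $(Y^T, Y^C) \perp I \mid U, C$ the conditional in the last factor simplifies to $p(Y^T, Y^C \mid U, C)$.

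Next, I would integrate both sides over $I$. Because $I \perp (U, C)$, the marginal density $p(I)$ coincides with $p(I \mid U, C)$, so
\begin{equation*}
p(Y^T, Y^C, D, U, C) = p(U, C)\, p(Y^T, Y^C \mid U, C)\, \int p(I)\, p(D \mid I, U)\, dI.
\end{equation*}
The remaining integral depends only on $U$; call it $\pi(U)$. Applying exactly the same marginalization argument to $(D, I, U, C)$ alone shows $p(D \mid U, C) = \pi(U)$ as well, so replacing $\pi(U)$ by $p(D \mid U, C)$ and dividing through by $p(U, C)$ yields
\begin{equation*}
p(Y^T, Y^C, D \mid U, C) = p(Y^T, Y^C \mid U, C)\, p(D \mid U, C),
\end{equation*}
which is the desired conditional independence $(Y^T, Y^C) \perp D \mid U, C$.

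The main obstacle is conceptual rather than computational: one has to justify that the causal DAG legitimately extends to the potential-outcome pair $(Y^T, Y^C)$ in the stated way, since the original graph is drawn only for the observed $Y$. I would resolve this by invoking the Neyman--Rubin convention that $(Y^T, Y^C)$ are pre-treatment quantities whose law is determined by their causal parents among $(I, U, C)$, and then noting that the assumption $(Y^T, Y^C) \perp I \mid U, C$ explicitly removes $I$ from that parent set. Once this factorization is in hand, the remainder of the argument is bookkeeping with conditional densities, exactly parallel to Lemma 2.
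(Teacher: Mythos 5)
Your proof is correct, and its skeleton is the same as the paper's: use the hypothesis to reduce the $(Y^T,Y^C)$ factor to $p(Y^T,Y^C\mid U,C)$, which no longer depends on $I$, and then integrate $I$ out. The difference is in how you factor the joint density before integrating. The paper works directly from two conditional independencies --- unconfoundedness gives $p(Y^T,Y^C,D,I,U,C)=p(Y^T,Y^C\mid I,U,C)\,p(D,I,U,C)$, and the hypothesis turns the first factor into $p(Y^T,Y^C\mid U,C)$ --- so that marginalizing over $I$ immediately produces $p(D,U,C)$ with no further work. You instead invoke the full DAG product factorization, which costs you two things: (i) you implicitly rely on $I\perp(U,C)$ and on $D\perp C\mid I,U$ (Lemma 1), neither of which the lemma actually needs; and (ii) you need the extra identification $\int p(I)\,p(D\mid I,U)\,dI=p(D\mid U,C)$ via a second marginalization. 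Both steps are carried out correctly, so nothing fails. The one place you should be more explicit is the factorization itself: writing the last factor as $p(Y^T,Y^C\mid I,U,C)$ rather than $p(Y^T,Y^C\mid D,I,U,C)$ is exactly where Assumption 1 (unconfoundedness) is being used, and it is cleaner to cite that assumption directly than to appeal to a convention about extending the DAG to the potential-outcome pair --- the chain rule plus Assumption 1, Lemma 1, and $I\perp(U,C)$ gives your displayed factorization rigorously. In short: your route is valid but heavier; the paper's is the minimal version of the same argument.
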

	\begin{proof}
		From the unconfoundedness assumption and $(Y^T, Y^C) \perp I \vert U, C$, we know that
		\begin{align*}
		p(Y^T, Y^C, D, I, U, C) = & \frac{p(Y^T, Y^C, I, U, C) p(D, I, U, C)}{p(I, U, C)}   \\
		= & \frac{p(Y^T, Y^C, U, C) p(D, I, U, C)}{p(U, C)}.
		\end{align*}
		Now integrate with respect to $I$ on both sides, we obtain that
		\begin{equation}
		p(Y^T, Y^C, D, U, C) = \frac{p(Y^T, Y^C, U, C) p(D, U, C)}{p(U, C)},   \tag*{}
		\end{equation}
		which is equivalent to the fact that
		\begin{equation}
		(Y^T, Y^C) \perp D \vert U, C.  \tag*{}
		\end{equation}
		End of proof.
	\end{proof}
	Now, to simplify the notation we denote $X_1=(U, C)$, since
	\begin{equation}
	p(X_1) = \mathbf{E} \lbrack D \vert X_1 \rbrack = \mathbf{E} \lbrack \mathbf{E} \lbrack D \vert X \rbrack \vert X_1 \rbrack = \mathbf{E} \lbrack p(X) \vert X_1 \rbrack,         \tag*{}
	\end{equation}
	The Overlap assumption (Assumption 2) also holds for $X_1$. Therefore, if we use $X_1$ instead of $X$ as covariates to be adjusted for in estimation of ATE, the semi-parametric efficiency bound will become
	\begin{equation}
	\mathbf{E} \lbrack \frac{\sigma_T^2(X_1)}{p(X_1)} + \frac{\sigma_C^2(X_1)}{1-p(X_1)} + (\beta(X_1)-\beta)^2 \rbrack,  \tag*{}
	\end{equation}
	Based on Part 3 of Hahn (2004)\cite{Hahn2004}, we can prove the following theorem comparing the above two efficiency bounds, and then illustrate the variance inflation effect on the asymptotically efficient estimators $\hat{\beta}_{imp}$ and $\hat{\beta}_{ipw}$ after including $I$ in $X$.
	\begin{theorem}
		Using the same notations as before, then the asymptotic efficiency bound becomes larger after including $I$ in $X$, i.e.,
		\begin{equation}
		\mathbf{E} \lbrack \frac{\sigma_T^2(X_1)}{p(X_1)} + \frac{\sigma_C^2(X_1)}{1-p(X_1)} + (\beta(X_1)-\beta)^2 \rbrack \leq \mathbf{E} \lbrack \frac{\sigma_T^2(X)}{p(X)} + \frac{\sigma_C^2(X)}{1-p(X)} + (\beta(X)-\beta)^2 \rbrack.   \tag*{}
		\end{equation}
	\end{theorem}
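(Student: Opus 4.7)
The plan is to show that, under the extra conditional independence $(Y^T,Y^C)\perp I\mid U,C$, the only difference between the two efficiency bounds lies in the inverse-propensity denominators, and then use Jensen's inequality via the tower property, exploiting that $p(X_1)=\mathbf{E}[p(X)\mid X_1]$.

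First I would reduce the comparison to a statement purely about the denominators. The assumption $(Y^T,Y^C)\perp I\mid U,C$ means the conditional distribution of $Y^T$ given $X=(I,U,C)$ coincides with its distribution given $X_1=(U,C)$, and similarly for $Y^C$. Consequently
\begin{equation*}
\beta_T(X)=\beta_T(X_1),\qquad \beta_C(X)=\beta_C(X_1),\qquad \sigma_T^2(X)=\sigma_T^2(X_1),\qquad \sigma_C^2(X)=\sigma_C^2(X_1),
\end{equation*}
so $\beta(X)=\beta(X_1)$ and the terms $(\beta(X)-\beta)^2$ and $(\beta(X_1)-\beta)^2$ have the same expectation. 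It therefore suffices to prove
\begin{equation*}
\mathbf{E}\Bigl[\frac{\sigma_T^2(X_1)}{p(X_1)}+\frac{\sigma_C^2(X_1)}{1-p(X_1)}\Bigr]\le \mathbf{E}\Bigl[\frac{\sigma_T^2(X_1)}{p(X)}+\frac{\sigma_C^2(X_1)}{1-p(X)}\Bigr].
\end{equation*}

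Next I would apply conditional Jensen's inequality. Since $X_1$ is a function of $X$ and $p(X_1)=\mathbf{E}[p(X)\mid X_1]$ (noted already in the text), and since $x\mapsto 1/x$ is convex on $(0,1)$,
\begin{equation*}
\mathbf{E}\Bigl[\frac{1}{p(X)}\,\Big|\,X_1\Bigr]\;\ge\;\frac{1}{\mathbf{E}[p(X)\mid X_1]}\;=\;\frac{1}{p(X_1)}.
\end{equation*}
Likewise $1-p(X_1)=\mathbf{E}[1-p(X)\mid X_1]$ and convexity of $1/x$ give $\mathbf{E}[1/(1-p(X))\mid X_1]\ge 1/(1-p(X_1))$. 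Multiplying by the $X_1$-measurable nonnegative weights $\sigma_T^2(X_1)$ and $\sigma_C^2(X_1)$ respectively, taking expectations, and using the tower property yields the displayed inequality, which together with the equality of the remaining terms gives the theorem.

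The main obstacle, if any, is merely the first reduction step: one must justify carefully that the conditional independence $(Y^T,Y^C)\perp I\mid U,C$ transfers cleanly to equality of both the conditional means $\beta_T(\cdot),\beta_C(\cdot)$ and the conditional second moments (hence variances) when computed over $X$ versus $X_1$. Once that is in hand, the rest is a one-line Jensen argument, completely parallel in spirit to Lemma~4 but applied in the opposite direction because here the enlarged $\sigma$-algebra $X$ contains $X_1$ rather than the other way round. No regularity beyond the Overlap assumption is needed since $p(X)\in(0,1)$ a.s.\ makes the integrands well defined.
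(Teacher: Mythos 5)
Your proof is correct and follows essentially the same route as the paper: reduce via $(Y^T,Y^C)\perp I\mid U,C$ to the equalities $\beta_T(X)=\beta_T(X_1)$, $\sigma_T^2(X)=\sigma_T^2(X_1)$ (and likewise for the control arm), then apply conditional Jensen's inequality to $1/p(X)$ and $1/(1-p(X))$ using $p(X_1)=\mathbf{E}[p(X)\mid X_1]$. The paper's in-text proof merely cites Hahn (2004, Part 3), but the detailed argument it supplies in the Appendix (under the proof of the companion theorem for $\tilde{\beta}_{imp}$ and $\tilde{\beta}_{ipw}$) is exactly your Jensen-plus-tower-property computation with $X_1$-measurable nonnegative weights.
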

	\begin{proof}
		According to Lemma 1, $p(Y^T, Y^C \vert X)=p(Y^T, Y^C \vert X_1)$. Combined with Hahn (2004, Part 3)\cite{Hahn2004} the Theorem can be proved.
	\end{proof}
	In fact, the similar results also hold for $\tilde{\beta}_{imp}$ and $\tilde{\beta}_{ipw}$. To be specific, consider the influence of including $I$ in $X$ on statistical efficiency of $\tilde{\beta}_{imp}$ and $\tilde{\beta}_{ipw}$, we also have
	\begin{theorem}
		Using the same notations as before, then the asymptotic variances of $\tilde{\beta}_{imp}$ and $\tilde{\beta}_{ipw}$ both become larger after including $I$ in $X$, i.e.,
		\begin{equation}
		\mathbf{E} \lbrack \frac{\sigma_T^2(X_1)+\beta_T(X_1)^2}{p(X_1)} + \frac{\sigma_C^2(X_1)+\beta_C(X_1)^2}{1-p(X_1)} \rbrack \leq \mathbf{E} \lbrack \frac{\sigma_T^2(X)+\beta_T(X)^2}{p(X)} + \frac{\sigma_C^2(X)+\beta_C(X)^2}{1-p(X)} \rbrack, \tag*{}
		\end{equation}
		and
		\begin{align*}
		& \mathbf{E} \lbrack      \frac{\sigma_T^2(X_1)+(\beta_T(X_1)-\beta_T)^2}{p(X_1)} + \frac{\sigma_C^2(X_1)+(\beta_C(X_1)-\beta_C)^2}{1-p(X_1)} \rbrack      \\
		\leq & \mathbf{E} \lbrack \frac{\sigma_T^2(X)+(\beta_T(X)-\beta_T)^2}{p(X)} + \frac{\sigma_C^2(X)+(\beta_C(X)-\beta_C)^2}{1-p(X)} \rbrack.
		\end{align*}
	\end{theorem}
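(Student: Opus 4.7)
The plan is to reduce both inequalities to a single application of conditional Jensen's inequality. Write $X=(I,U,C)$ and $X_1=(U,C)$. Under the hypothesis $(Y^T,Y^C)\perp I\vert U,C$, the conditional law of the potential outcomes given $X$ coincides with their conditional law given $X_1$, so
$$\beta_T(X)=\beta_T(X_1),\quad \beta_C(X)=\beta_C(X_1),\quad \sigma_T^2(X)=\sigma_T^2(X_1),\quad \sigma_C^2(X)=\sigma_C^2(X_1)$$
as $X_1$-measurable random variables. Hence the numerators appearing on the two sides of each displayed inequality are literally identical.

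All the difference between the two sides therefore lives in the denominators $p(X)$ versus $p(X_1)$ and $1-p(X)$ versus $1-p(X_1)$. I would compare them via the tower identity $p(X_1)=\mathbf{E}[p(X)\vert X_1]$, already noted in the paragraph preceding Theorem 5. Conditioning on $X_1$, the $T$-group term on the right-hand side of the first inequality becomes $\mathbf{E}\bigl[(\sigma_T^2(X_1)+\beta_T(X_1)^2)\,\mathbf{E}[1/p(X)\vert X_1]\bigr]$. Since $t\mapsto 1/t$ is convex on $(0,1)$, conditional Jensen's inequality gives $\mathbf{E}[1/p(X)\vert X_1]\geq 1/\mathbf{E}[p(X)\vert X_1]=1/p(X_1)$; multiplying by the nonnegative $X_1$-measurable numerator and taking the outer expectation gives the inequality for the $T$-term. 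The same argument applied to the convex function $t\mapsto 1/(1-t)$ handles the $C$-term, and summing produces the first inequality.

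The second inequality is handled identically: since $\beta_T=\mathbf{E}Y^T$ and $\beta_C=\mathbf{E}Y^C$ are unconditional constants, the numerators $\sigma_T^2(X_1)+(\beta_T(X_1)-\beta_T)^2$ and $\sigma_T^2(X)+(\beta_T(X)-\beta_T)^2$ coincide (and likewise for $C$), and the same two Jensen applications close the argument. There is no genuine obstacle beyond bookkeeping; the Overlap Assumption, already shown to transfer from $X$ to $X_1$ earlier in the section, keeps the reciprocals well-defined, and it is exactly the conditional independence $(Y^T,Y^C)\perp I\vert U,C$ that collapses a comparison of two apparently different expectations into a clean inequality about the weights $1/p(\cdot)$ and $1/(1-p(\cdot))$ alone.
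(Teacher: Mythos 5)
Your proof is correct and is essentially the paper's own argument: both reduce the claim to the observation that $(Y^T,Y^C)\perp I\mid U,C$ makes the numerators $X_1$-measurable and identical on both sides, and then apply conditional Jensen's inequality to the convex maps $t\mapsto 1/t$ and $t\mapsto 1/(1-t)$ together with the tower identity $\mathbf{E}[p(X)\mid X_1]=p(X_1)$. Nothing is missing.
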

	(Proof in Appendix)
	
	\section{The Linearly Modified Estimator and Its Properties}
	\noindent In this section, we will first review the KPS estimator proposed in Rothe (2016), which aims to solve the "curse of dimensionality" issue encountered in estimating propensity scores which is necessary for calculating $\hat{\beta}_{imp}$ and $\hat{\beta}_{ipw}$. Then we may introduce our linearly modified (LM) estimator and discuss its properties. Since both KPS estimator and LM estimator require known propensity score function, throughout this section we make the following assumption:
	\begin{assumption}
		The propensity score function, $p(X)$, is known.
	\end{assumption}
	
	\subsection{The KPS Estimator}
	\noindent The value of knowing the propensity score in estimating ATE is highlighted in Rothe (2016)\cite{Rothe2016}, although many existing results seem to suggest that it is not necessary and sometimes even harmful to try to obtain knowledge of propensity score for solely purpose of estimation. Rothe (2016)\cite{Rothe2016} proposed a "known propensity score" (KPS) estimator, denoted by $\hat{\beta}_{kps}$, and was built upon the ideas from literature investigating double robustness of estimators (e.g. Robins et al., 1994\cite{Robins1994}; Robins et al., 1995\cite{Robins1995}). The KPS estimator takes the form of
	\begin{equation}
	\hat{\beta}_{kps} = \frac{1}{n} \sum_{i=1}^{n} (\frac{D_i Y_i}{p(X_i)} - \frac{(1-D_i) Y_i}{1-p(X_i)} - (D_i - p(X_i)) (\frac{\hat{\beta}_T(X_i)}{p(X_i)} - \frac{\hat{\beta}_C(X_i)}{1-p(X_i)})),         \tag*{}
	\end{equation}
	where $\hat{\beta}_T(X_i)$ and $\hat{\beta}_C(X_i)$ are non-parametric estimators of $\beta_T(X_i)$ and $\beta_C(X_i)$ respectively. Rothe (2016)\cite{Rothe2016} argued that as long as $\beta_T(X_i)$ and $\beta_C(X_i)$ can be consistently estimated, the KPS estimator is fully efficient. Furthermore, compared to previously introduced efficient estimators that require consistently estimating the propensity scores, the KPS estimator ask a rather low degree of accuracy on estimation and rather mild regularity conditions (Rothe, 2016)\cite{Rothe2016}.
	
	\subsection{The Rationale Behind our Linearly Modified Estimator}
	\noindent The rationale behind our linearly modified (LM) estimator, is that on the one hand all aforementioned estimators that are completely efficient, including the KPS estimator, require either $p(X)$ or $(\beta_T(X), \beta_C(X))$ to be estimated consistently, no matter to what extent certain regularity conditions are supposed to be satisfied. This could be difficult and computationally expensive, especially under the general setting that we do not make any specific assumptions on underlying probability model. On the other hand, $\tilde{\beta}_{ipw}$ only depends on the known propensity score and observed data and thus is easy to calculate, but not fully efficient and sometimes may have poor performance compared to efficient estimators of ATE.
	
	The LM estimator applies a linear modification to $\tilde{\beta}_{ipw}$ using the difference of weighted average of covariates between treatment group and control group, and it can solve both issues addressed in a large degree. It is similar to the interaction estimator in Lin (2013)\cite{Lin2013}, that utilizes covariates as ancillary to improve asymptotic precision of the intention-to-treat (ITT) estimator (Freedman, 2008)\cite{Freedman2008}. To be specific, define
	\begin{equation}
	\tilde{x}_{ipw} = (1 / \sum_{i=1}^{n} \frac{D_i}{p(X_i)}) \sum_{i=1}^{n} \frac{D_i X_i}{p(X_i)} - (1 / \sum_{i=1}^{n} \frac{1-D_i}{1-p(X_i)}) \sum_{i=1}^{n} \frac{(1 - D_i) X_i}{1 - p(X_i)}.  \tag*{}
	\end{equation}
	And we may expect that the information of $X$ provided by $\tilde{x}_{ipw}$ is helpful in estimating ATE, especially help to reduce the proportion of asymptotic variance of $\tilde{x}_{ipw}$ that exceeds the semi-parametric efficiency bound. If $X \in \mathbb{R}^K$, then $\tilde{x}_{ipw} \in \mathbb{R}^K$. In following discussion, we use $\text{asycov} (\mathbf{x}, \mathbf{y})$ to denote asymptotic covariance matrix between random vectors $\mathbf{x}$ and $\mathbf{y}$, and $\text{asyvar} (x)$ to denote asymptotic variance of random variable $x$.
	\begin{lemma}
		Under Assumptions 1, 2 and 3, we have
		\begin{align*}
		& \text{asycov} (\tilde{x}_{ipw}, \tilde{x}_{ipw}) = \mathbf{E} \lbrack \frac{1}{p(X)(1-p(X))} (X-\mathbf{E}X) (X-\mathbf{E}X)^T \rbrack, \\
		& \text{asycov} (\tilde{x}_{ipw}, \tilde{\beta}_{ipw}) = \mathbf{E} \lbrack (X-\mathbf{E}X) (\frac{1}{p(X)} (\beta_T(X)-\beta_T) + \frac{1}{1-p(X)} (\beta_C(X)-\beta_C)) \rbrack.
		\end{align*}
	\end{lemma}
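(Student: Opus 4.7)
The plan is to derive first-order stochastic expansions (influence functions) for $\tilde{x}_{ipw}$ and $\tilde{\beta}_{ipw}$, and then read off the asymptotic (co)variance as an expected product of these influence functions. Since $\frac{1}{n}\sum_i D_i/p(X_i)\xrightarrow{p}\mathbf{E}\lbrack D/p(X)\rbrack=1$ and $\frac{1}{n}\sum_i(1-D_i)/(1-p(X_i))\xrightarrow{p}1$, a delta-method expansion of the same kind used in the proof of Theorem 2 yields
\begin{equation*}
\sqrt{n}(\tilde{\beta}_{ipw}-\beta) = \frac{1}{\sqrt{n}}\sum_{i=1}^n \psi_\beta(D_i,X_i,Y_i) + o_p(1),\qquad \psi_\beta(D,X,Y) = \frac{D(Y-\beta_T)}{p(X)}-\frac{(1-D)(Y-\beta_C)}{1-p(X)},
\end{equation*}
and, applied componentwise with the pre-treatment vector $X$ playing the role of its own potential outcomes,
\begin{equation*}
\sqrt{n}\,\tilde{x}_{ipw} = \frac{1}{\sqrt{n}}\sum_{i=1}^n \psi_x(D_i,X_i) + o_p(1),\qquad \psi_x(D,X) = (X-\mathbf{E}X)\cdot\frac{D-p(X)}{p(X)(1-p(X))},
\end{equation*}
using the algebraic simplification $D/p(X)-(1-D)/(1-p(X)) = (D-p(X))/(p(X)(1-p(X)))$.

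For the variance block, either invoke Theorem 2 coordinatewise (with both potential outcomes of each coordinate of $X$ equal to that coordinate itself, so that $\sigma_T^2=\sigma_C^2\equiv 0$ and $\beta_T(X)-\beta_T=\beta_C(X)-\beta_C=X-\mathbf{E}X$), or compute $\mathbf{E}\lbrack \psi_x\psi_x^T\rbrack$ directly: condition on $X$ and use $\mathbf{E}\lbrack (D-p(X))^2\mid X\rbrack=p(X)(1-p(X))$ to cancel one power of $p(X)(1-p(X))$ from the denominator, which delivers the first identity.

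For the cross-covariance, I would compute $\mathbf{E}\lbrack \psi_x\psi_\beta\rbrack$ by first conditioning on $(X,D)$ and invoking unconfoundedness to substitute $\mathbf{E}\lbrack Y\mid X,D\rbrack=D\beta_T(X)+(1-D)\beta_C(X)$, which reduces $\mathbf{E}\lbrack \psi_\beta\mid X,D\rbrack$ to $D(\beta_T(X)-\beta_T)/p(X) - (1-D)(\beta_C(X)-\beta_C)/(1-p(X))$. Integrating over $D\mid X$ via the elementary identities $\mathbf{E}\lbrack D(D-p(X))\mid X\rbrack=p(X)(1-p(X))$ and $\mathbf{E}\lbrack (1-D)(D-p(X))\mid X\rbrack=-p(X)(1-p(X))$ cancels the $p(X)(1-p(X))$ denominator from $\psi_x$; the minus sign from the second identity combines with the minus sign inside $\psi_\beta$, and the two conditional terms add up to $(\beta_T(X)-\beta_T)/p(X)+(\beta_C(X)-\beta_C)/(1-p(X))$, giving the stated formula after taking the outer expectation multiplied by $(X-\mathbf{E}X)$. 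The main obstacle is bookkeeping in the linearization step: one must carefully verify that the delta-method corrections from the normalizing denominators produce the centered quantities $Y-\beta_T$ and $Y-\beta_C$ inside $\psi_\beta$ (rather than just $Y$), since this centering is precisely what makes the differences $\beta_T(X)-\beta_T$ and $\beta_C(X)-\beta_C$ appear in the final cross-covariance; once Theorem 2 (already proved in the appendix) is in hand, the remaining steps are routine.
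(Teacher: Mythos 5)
Your proposal is correct and follows essentially the same route as the paper: both linearize the ratio estimators via the delta method to obtain the influence functions $(X-\mathbf{E}X)\frac{D-p(X)}{p(X)(1-p(X))}$ for $\tilde{x}_{ipw}$ and $\frac{D(Y-\beta_T)}{p(X)}-\frac{(1-D)(Y-\beta_C)}{1-p(X)}$ for $\tilde{\beta}_{ipw}$ (the paper's Theorem 2 expansion), and then read off the asymptotic covariances as expectations of products of influence functions. The only difference is that you explicitly carry out the conditional-moment bookkeeping, using $\mathbf{E}\lbrack (D-p(X))^2 \vert X\rbrack = p(X)(1-p(X))$, $\mathbf{E}\lbrack D(D-p(X)) \vert X\rbrack = p(X)(1-p(X))$ and unconfoundedness, which the paper's proof leaves to the reader as ``easily verified.''
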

	(Proof in Appendix)
	
	Based on Lemma 3, we can show the following powerful proposition that helps us to construct $\sqrt{n}-\text{consistent}$ estimators from $\tilde{\beta}_{ipw}$:
	\begin{prop}
		Let $\alpha \in \mathbb{R}^K$ be any constant vector, if $\hat{\alpha}$ is a $\sqrt{n}-\text{consistent}$ estimator of $\alpha$, i.e., $\hat{\alpha}-\alpha = O_P(1/\sqrt{n})$, then as $n$ grows to infinity, $\sqrt{n} (\tilde{\beta}_{ipw} - \hat{\alpha}^T \tilde{x}_{ipw} - \beta)$ converges to a normal distribution in law, with mean $0$ and variance equal to
		\begin{equation}
		\text{asyvar}(\tilde{\beta}_{ipw}) - 2 \alpha^T \text{asycov} (\tilde{x}_{ipw}, \tilde{\beta}_{ipw}) + \alpha^T \text{asycov} (\tilde{x}_{ipw}, \tilde{x}_{ipw}) \alpha.  \tag*{}
		\end{equation}
	\end{prop}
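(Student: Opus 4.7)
The plan is to reduce the statement to an application of the joint central limit theorem already used implicitly in Theorem 2 and Lemma 6, together with a negligibility argument that absorbs the estimation error $\hat{\alpha}-\alpha$. First I would check that $\tilde{x}_{ipw}$ is centered at zero in population. Under Assumption 3, the tower property gives $\mathbf{E}[D X/p(X)] = \mathbf{E}[X]$ and $\mathbf{E}[(1-D)X/(1-p(X))] = \mathbf{E}[X]$, so the numerators of the two pieces of $\tilde{x}_{ipw}$ have equal expectations and the denominators converge in probability to $1$. A standard ratio/Slutsky argument (identical to the one underlying the proof of Theorem 2, only applied coordinate-wise to $X$) then shows $\sqrt{n}\,\tilde{x}_{ipw}$ converges in law to a mean-zero normal with covariance matrix given by Lemma 6, and in particular $\tilde{x}_{ipw} = O_P(1/\sqrt{n})$.

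Next, I would split
\begin{equation*}
\sqrt{n}\bigl(\tilde{\beta}_{ipw} - \hat{\alpha}^T \tilde{x}_{ipw} - \beta\bigr)
= \sqrt{n}\bigl(\tilde{\beta}_{ipw} - \alpha^T \tilde{x}_{ipw} - \beta\bigr) - \sqrt{n}(\hat{\alpha}-\alpha)^T \tilde{x}_{ipw}.
\end{equation*}
Since $\hat{\alpha}-\alpha = O_P(1/\sqrt{n})$ by hypothesis and $\tilde{x}_{ipw}=O_P(1/\sqrt{n})$ by the previous step, the second term is $\sqrt{n}\cdot O_P(1/n) = o_P(1)$. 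By Slutsky's theorem, it therefore suffices to determine the limit law of the deterministic-coefficient combination $\sqrt{n}(\tilde{\beta}_{ipw} - \alpha^T\tilde{x}_{ipw} - \beta)$.

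For this last step I would invoke the joint central limit theorem applied to the $(K+1)$-dimensional vector $(\tilde{\beta}_{ipw},\tilde{x}_{ipw}^T)^T$. The calculation behind Theorem 2 already exhibits $\sqrt{n}(\tilde{\beta}_{ipw}-\beta)$ as an asymptotically linear statistic in the i.i.d.\ summands $(D_iY_i/p(X_i) - (1-D_i)Y_i/(1-p(X_i)))$; Lemma 6 establishes an analogous asymptotically linear representation for $\sqrt{n}\,\tilde{x}_{ipw}$ with influence function $\frac{D-p(X)}{p(X)(1-p(X))}(X-\mathbf{E}X)$ (up to the usual ratio-denominator cleanup). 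Combining these two linear expansions on a common probability space yields joint asymptotic normality with mean zero and the cross-covariances computed in Lemma 6. The continuous mapping theorem applied to the linear functional $(u,v)\mapsto u - \alpha^T v$ then delivers a mean-zero Gaussian limit with variance
\begin{equation*}
\text{asyvar}(\tilde{\beta}_{ipw}) - 2\alpha^T \text{asycov}(\tilde{x}_{ipw},\tilde{\beta}_{ipw}) + \alpha^T \text{asycov}(\tilde{x}_{ipw},\tilde{x}_{ipw})\alpha,
\end{equation*}
which is exactly the claim.

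The main obstacle is not conceptual but bookkeeping: joint asymptotic normality of $(\tilde{\beta}_{ipw},\tilde{x}_{ipw})$ requires verifying that the random self-normalizing denominators $\sum_i D_i/p(X_i)$ and $\sum_i (1-D_i)/(1-p(X_i))$ can be replaced by their probability limits without disturbing the $\sqrt{n}$-scale limit. This is the step where one typically writes out a single common i.i.d.\ influence function per coordinate via a first-order Taylor expansion of $(a,b)\mapsto a/b$. Once this standardization is in place, the conclusion of the proposition follows immediately from Slutsky and the continuous mapping theorem, with no further ingredients beyond Lemma 6 and the $\sqrt{n}$-consistency of $\hat{\alpha}$.
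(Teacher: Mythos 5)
Your proof is correct and follows essentially the same route as the paper: both arguments use the mean-zero influence-function representation of $\tilde{x}_{ipw}$ to get $\tilde{x}_{ipw}=O_P(1/\sqrt{n})$, absorb the term $\sqrt{n}(\hat{\alpha}-\alpha)^T\tilde{x}_{ipw}=o_P(1)$, and then conclude via the joint asymptotic linearity of $(\tilde{\beta}_{ipw},\tilde{x}_{ipw})$ from Lemma 6 and the proof of Theorem 2. You merely spell out the joint CLT and Slutsky step that the paper leaves as ``Proposition 1 then naturally follows.''
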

	(Proof in Appendix)
	
	Hence, in order to minimize the asymptotic variance of such $\sqrt{n}-\text{consistent}$ estimator of ATE in the form of $\tilde{\beta}_{ipw} - \hat{\alpha}^T \tilde{x}_{ipw}$, i.e., a linear modification of $\tilde{\beta}_{ipw}$ using $\tilde{x}_{ipw}$, it's nature to take
	\begin{equation}
	\alpha = \text{asycov} (\tilde{x}_{ipw}, \tilde{x}_{ipw})^{-1} \text{asycov} (\tilde{x}_{ipw}, \tilde{\beta}_{ipw}),    \tag*{}
	\end{equation}
	and find an estimator $\hat{\alpha}$ of $\alpha$ such that $\hat{\alpha}-\alpha = O_P(1/\sqrt{n})$. It suffices to find empirical estimators of $\text{asycov} (\tilde{x}_{ipw}, \tilde{x}_{ipw})$ and $\text{asycov} (\tilde{x}_{ipw}, \tilde{\beta}_{ipw})$ that are $\sqrt{n}-\text{consistent}$ respectively. The following Proposition 2 gives our desired estimators.
	\begin{prop}
		Let $\bar{X} = (1/n) \sum_{i=1}^{n} X_i$ denote sample mean of covariates, then
		\begin{equation}
		\widehat{\text{asycov}} (\tilde{x}_{ipw}, \tilde{x}_{ipw}) = \frac{1}{n} \sum_{i=1}^{n} \frac{1}{p(X_i) (1-p(X_i))} (X_i - \bar{X}) (X_i - \bar{X})^T   \tag*{}
		\end{equation}
		is a $\sqrt{n}-\text{consistent}$ estimator of $\text{asycov} (\tilde{x}_{ipw}, \tilde{x}_{ipw})$, in the same time,
		\begin{equation}
		\widehat{\text{asycov}} (\tilde{x}_{ipw}, \tilde{\beta}_{ipw}) = \frac{1}{n} \sum_{j=1}^{n} (X_j - \bar{X}) (\frac{D_j}{p(X_j)^2} (Y_j^T - \frac{1}{n} \sum_{i=1}^{n} \frac{D_i Y_i}{p(X_i)}) + \frac{1-D_j}{(1-p(X_j))^2} (Y_j^C - \frac{1}{n} \sum_{i=1}^{n} \frac{(1-D_i) Y_i}{1-p(X_i)}))  \tag*{}
		\end{equation}
		is a $\sqrt{n}-\text{consistent}$ estimator of $\text{asycov} (\tilde{x}_{ipw}, \tilde{\beta}_{ipw})$.
	\end{prop}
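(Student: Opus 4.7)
The plan is to write each empirical estimator as a polynomial combination of a fixed number of i.i.d.\ sample means in observables, show that every component sample mean is $\sqrt{n}$-consistent for its expectation by the usual CLT (available because the Overlap Assumption keeps $p(X)$ and $1-p(X)$ bounded away from $0$), and then propagate the $O_P(1/\sqrt{n})$ errors through Slutsky-type expansions. Verifying that the resulting population limits match the target formulas from Lemma 3 is then just a matter of expanding $(X - \mathbf{E}X)$ at the population level so as to mirror the sample-level expansion of $(X_i - \bar{X})$.

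For the first claim I would expand $(X_i - \bar{X})(X_i - \bar{X})^T$ to rewrite
\[
\widehat{\text{asycov}}(\tilde{x}_{ipw}, \tilde{x}_{ipw}) = A_n - B_n \bar{X}^T - \bar{X} B_n^T + C_n\, \bar{X}\, \bar{X}^T,
\]
with $A_n := \frac{1}{n}\sum_i X_i X_i^T / (p(X_i)(1-p(X_i)))$, $B_n := \frac{1}{n}\sum_i X_i / (p(X_i)(1-p(X_i)))$, and $C_n := \frac{1}{n}\sum_i 1/(p(X_i)(1-p(X_i)))$. Each of $A_n, B_n, C_n, \bar{X}$ equals its expectation up to $O_P(1/\sqrt{n})$; since all limiting constants are finite, the cross terms in the polynomial expansion stay $O_P(1/\sqrt{n})$. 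The corresponding population-level expansion of $\mathbf{E}[(X-\mathbf{E}X)(X-\mathbf{E}X)^T/(p(X)(1-p(X)))]$ then matches the limit, closing the first half.

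For the second claim I would proceed in two layers, since the inner means $\hat{\beta}_T := \frac{1}{n}\sum_i D_i Y_i / p(X_i)$ and $\hat{\beta}_C := \frac{1}{n}\sum_i (1-D_i) Y_i / (1-p(X_i))$ add nesting. By Assumption 1 and the tower property, $\mathbf{E}[DY/p(X)] = \mathbf{E}[\beta_T(X)] = \beta_T$ and analogously for the control arm, so the CLT gives $\hat{\beta}_T - \beta_T = O_P(1/\sqrt{n})$ and $\hat{\beta}_C - \beta_C = O_P(1/\sqrt{n})$. Pulling these scalar errors out of the outer sum and bounding $\frac{1}{n}\sum_j (X_j - \bar{X}) D_j / p(X_j)^2 = O_P(1)$ (and its control analogue) by the LLN reduces the estimator to
\[
\widehat{\text{asycov}}(\tilde{x}_{ipw}, \tilde{\beta}_{ipw}) = \frac{1}{n}\sum_j (X_j - \bar{X}) W_j + O_P(1/\sqrt{n}),
\]
where $W_j := D_j(Y_j - \beta_T)/p(X_j)^2 + (1-D_j)(Y_j - \beta_C)/(1-p(X_j))^2$, using $D_j Y_j^T = D_j Y_j$ and $(1-D_j) Y_j^C = (1-D_j) Y_j$ to express everything in observables. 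Splitting $\frac{1}{n}\sum_j (X_j - \bar{X}) W_j$ as $\frac{1}{n}\sum_j X_j W_j - \bar{X}\cdot \frac{1}{n}\sum_j W_j$ and invoking the CLT on each piece, a final Slutsky expansion identifies the leading term with $\mathbf{E}[XW] - \mathbf{E}X \cdot \mathbf{E}W = \mathbf{E}[(X - \mathbf{E}X) W]$. The tower-property identity $\mathbf{E}[D Y^T / p(X)^2 \mid X] = \beta_T(X)/p(X)$ and its control analogue then match this to $\text{asycov}(\tilde{x}_{ipw}, \tilde{\beta}_{ipw})$.

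The main obstacle will be keeping the bookkeeping of $O_P(1)$ and $O_P(1/\sqrt{n})$ factors tight across the two-layer decomposition in the second claim, in particular confirming that all cross-products of $\hat{\beta}_T - \beta_T$, $\hat{\beta}_C - \beta_C$, and $\bar{X} - \mathbf{E}X$ collapse into a single $O_P(1/\sqrt{n})$ remainder rather than something slower. The Overlap Assumption combined with finite second moments of $X$, $Y^T$, $Y^C$ is exactly what keeps every multiplicative factor in the expansion at the $O_P(1)$ level, which is what makes the propagation sharp enough for $\sqrt{n}$-consistency.
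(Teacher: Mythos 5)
Your proposal is correct and follows essentially the same route as the paper's proof: replace $\bar{X}$ and the inner IPW means by their population limits at an $O_P(1/\sqrt{n})$ cost, apply the CLT/LLN to the remaining i.i.d.\ averages, and identify the limit via the tower property $\mathbf{E}[D(Y^T-\beta_T)/p(X)^2 \mid X] = (\beta_T(X)-\beta_T)/p(X)$. Your bookkeeping of the cross-terms is in fact more explicit than the paper's (which even contains a harmless typo claiming $\bar{X}-\mathbf{E}X = O_P(1/n)$), but the substance is identical.
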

	(Proof in Appendix)
	
	Now it's time for us to define our LM estimator, choose
	\begin{equation}
	\hat{\alpha} = \widehat{\text{asycov}} (\tilde{x}_{ipw}, \tilde{x}_{ipw})^{-1} \widehat{\text{asycov}} (\tilde{x}_{ipw}, \tilde{\beta}_{ipw}),   \tag*{}
	\end{equation}
	and then define $\tilde{\beta}_{lm} = \tilde{\beta}_{ipw} - \hat{\alpha}^T \tilde{x}_{ipw}$ to be the LM estimator. Its statistical properties will be analyzed in next section.
	
	\subsection{Statistical Properties}
	\noindent In this section, we investigate some of $\tilde{\beta}_{lm}$'s important properties, as is well known that since the linear correction term $\tilde{x}_{ipw}$ is of order $1/\sqrt{n}$ in probability, $\tilde{\beta}_{lm}$ is still $\sqrt{n}-\text{consistent}$ and asymptotically normal, and its asymptotic variance can be calculated directly from Proposition 1. Concerning its asymptotic efficiency, we have the following theorem.
	\begin{theorem}
		Under Assumptions 1, 2 and 3, $\tilde{\beta}_{lm}$ is more efficient than $\tilde{\beta}_{ipw}$ in estimating ATE, and strictly more efficient unless
		\begin{equation}
		\text{asycov} (\tilde{x}_{ipw}, \tilde{\beta}_{ipw}) = 0,    \tag*{}
		\end{equation}
		which is equivalent to the fact that
		\begin{equation}
		\frac{1}{p(X)} (\beta_T(X)-\beta_T) + \frac{1}{1-p(X)} (\beta_C(X)-\beta_C) \ \text{is uncorrelated with} \ X.       \tag*{}
		\end{equation}
	\end{theorem}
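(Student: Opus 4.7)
The plan is to combine Propositions 1 and 2 with an elementary quadratic-form minimization. First, I would observe that by Proposition 2 the two empirical covariance matrices are $\sqrt{n}$-consistent, so by the continuous mapping theorem (using $\sqrt{n}$-consistency of matrix inversion at an invertible limit) the data-driven
\[
\hat{\alpha} = \widehat{\text{asycov}} (\tilde{x}_{ipw}, \tilde{x}_{ipw})^{-1} \widehat{\text{asycov}} (\tilde{x}_{ipw}, \tilde{\beta}_{ipw})
\]
is $\sqrt{n}$-consistent for the population minimizer
\[
\alpha^{\ast} = \text{asycov} (\tilde{x}_{ipw}, \tilde{x}_{ipw})^{-1} \text{asycov} (\tilde{x}_{ipw}, \tilde{\beta}_{ipw}),
\]
provided $\text{asycov} (\tilde{x}_{ipw}, \tilde{x}_{ipw})$ is invertible. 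The latter holds because it equals $\mathbf{E}[(X-\mathbf{E}X)(X-\mathbf{E}X)^T / (p(X)(1-p(X)))]$, which is positive definite under Assumption 2 as long as the components of $X$ are linearly independent (the standard non-degeneracy implicit in a well-posed regression problem).

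Next, I would apply Proposition 1 with $\alpha=\alpha^{\ast}$ and $\hat\alpha$ as above to conclude that $\sqrt{n}(\tilde{\beta}_{lm}-\beta)$ is asymptotically normal with variance
\[
\text{asyvar}(\tilde{\beta}_{ipw}) - \text{asycov}(\tilde{x}_{ipw},\tilde{\beta}_{ipw})^T \text{asycov}(\tilde{x}_{ipw},\tilde{x}_{ipw})^{-1} \text{asycov}(\tilde{x}_{ipw},\tilde{\beta}_{ipw}),
\]
obtained by substituting $\alpha^{\ast}$ into the quadratic expression of Proposition 1 and simplifying. Since the inverse of a positive definite matrix is positive definite, the correction term is non-negative, which already yields $\text{asyvar}(\tilde{\beta}_{lm}) \leq \text{asyvar}(\tilde{\beta}_{ipw})$. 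Moreover, by positive definiteness, equality in this bound holds if and only if $\text{asycov}(\tilde{x}_{ipw},\tilde{\beta}_{ipw}) = 0$, settling the first equivalence.

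For the second characterization, I would read off the explicit formula for $\text{asycov}(\tilde{x}_{ipw},\tilde{\beta}_{ipw})$ supplied by the lemma immediately preceding Proposition 1, namely
\[
\text{asycov}(\tilde{x}_{ipw},\tilde{\beta}_{ipw}) = \mathbf{E}\Bigl[(X-\mathbf{E}X)\Bigl(\tfrac{1}{p(X)}(\beta_T(X)-\beta_T) + \tfrac{1}{1-p(X)}(\beta_C(X)-\beta_C)\Bigr)\Bigr].
\]
Writing $W(X)$ for the scalar factor in parentheses, this equals $\text{Cov}(X, W(X))$ since the centering of $X$ kills the constant part. Therefore the vanishing of the asymptotic covariance vector is literally equivalent to the uncorrelatedness of $X$ with $W(X)$, which is the claimed equivalence.

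The whole argument is essentially algebraic once Propositions 1 and 2 and the asymptotic covariance lemma are in hand; the only step requiring care is the invertibility and continuous-mapping justification for the $\sqrt{n}$-consistency of $\hat\alpha$. That is where I would be most careful, since the quadratic-form reduction trivially gives the inequality once $\hat\alpha \to \alpha^{\ast}$ at the right rate, and the equivalence of conditions is then just matching expressions.
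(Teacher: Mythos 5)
Your proposal is correct and follows essentially the same route as the paper's proof: choose $\hat\alpha$ as the empirical analogue of $\alpha^{\ast}=\text{asycov}(\tilde{x}_{ipw},\tilde{x}_{ipw})^{-1}\text{asycov}(\tilde{x}_{ipw},\tilde{\beta}_{ipw})$, substitute into the quadratic of Proposition 1 to get the variance reduction term $\text{asycov}(\tilde{x}_{ipw},\tilde{\beta}_{ipw})^T\text{asycov}(\tilde{x}_{ipw},\tilde{x}_{ipw})^{-1}\text{asycov}(\tilde{x}_{ipw},\tilde{\beta}_{ipw})$, and read the vanishing condition off the covariance formula in Lemma 6. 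Your explicit attention to the invertibility of $\text{asycov}(\tilde{x}_{ipw},\tilde{x}_{ipw})$ and the continuous-mapping step for $\sqrt{n}$-consistency of $\hat\alpha$ fills in details the paper leaves implicit, but does not change the argument.
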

	(Proof in Appendix)
	
	Since $\tilde{\beta}_{lm}$ only relies on the known propensity score function and observed data, we can not expect that its asymptotic variance will reach the semi-parametric efficiency bound. However, this is nearly true when $\beta_T(X)$ and $\beta_C(X)$ are both strongly correlated with $X$, which is very common in most circumstances. We will use some numerical examples to illustrate this point in next section. In fact, $\tilde{\beta}_{lm}$ only suffers from small efficiency loss compared to efficient estimators of ATE while enjoys a great convenience compared to them since we need not calculate non-parametric estimators of propensity score function and conditional expectations of potential outcomes. Sometimes $\tilde{\beta}_{lm}$ is fully efficient, see the following theorem.
	\begin{theorem}
		If there exist $a, b \in \mathbb{R}$ and $c \in \mathbb{R}^K$ such that for the potential outcomes $Y^T$ and $Y^C$,
		\begin{equation}
		\beta_T(X)=\mathbf{E} \lbrack Y^T \vert X \rbrack = a + b + c^T X, \beta_C(X)=\mathbf{E} \lbrack Y^C \vert X \rbrack = a + c^T X.  \tag*{}
		\end{equation}
		Then $\tilde{\beta}_{lm}$ is asymptotically efficient.
	\end{theorem}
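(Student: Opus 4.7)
The plan is to show that, under the stated linear model for $\beta_T$ and $\beta_C$, the optimal linear correction exactly removes the gap between $\text{asyvar}(\tilde{\beta}_{ipw})$ and the semi-parametric efficiency bound.

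First I would simplify both quantities separately. Under the linear assumption one has $\beta(X) = \beta_T(X) - \beta_C(X) = b$, so $\beta = b$ and $\beta(X) - \beta \equiv 0$; the efficiency bound therefore collapses to $\mathbf{E}[\sigma_T^2(X)/p(X) + \sigma_C^2(X)/(1-p(X))]$. Also, $\beta_T(X) - \beta_T = \beta_C(X) - \beta_C = c^T(X - \mathbf{E}X)$. Plugging into the formula for $\text{asyvar}(\tilde{\beta}_{ipw})$ given in Theorem 2 shows that the excess over the efficiency bound equals
\begin{equation*}
\mathbf{E}\!\left[\left(\tfrac{1}{p(X)} + \tfrac{1}{1-p(X)}\right)(c^T(X - \mathbf{E}X))^2\right] = \mathbf{E}\!\left[\tfrac{(c^T(X - \mathbf{E}X))^2}{p(X)(1-p(X))}\right].
\end{equation*}

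Next I would invoke Proposition 1 together with the optimal choice of $\alpha$ to see that the asymptotic variance of $\tilde{\beta}_{lm}$ equals $\text{asyvar}(\tilde{\beta}_{ipw}) - \text{asycov}(\tilde{x}_{ipw}, \tilde{\beta}_{ipw})^T \text{asycov}(\tilde{x}_{ipw}, \tilde{x}_{ipw})^{-1} \text{asycov}(\tilde{x}_{ipw}, \tilde{\beta}_{ipw})$. Using Lemma 7 and substituting the linear forms of $\beta_T(X) - \beta_T$ and $\beta_C(X) - \beta_C$, the cross-covariance simplifies to
\begin{equation*}
\text{asycov}(\tilde{x}_{ipw}, \tilde{\beta}_{ipw}) = \mathbf{E}\!\left[\tfrac{(X - \mathbf{E}X)(X - \mathbf{E}X)^T}{p(X)(1-p(X))}\right] c = \text{asycov}(\tilde{x}_{ipw}, \tilde{x}_{ipw})\, c.
\end{equation*}
Hence the population-level optimal $\alpha$ equals $c$, and the quadratic form collapses to $c^T \text{asycov}(\tilde{x}_{ipw}, \tilde{x}_{ipw}) c = \mathbf{E}[(c^T(X - \mathbf{E}X))^2 / (p(X)(1-p(X)))]$, which coincides exactly with the excess computed in the previous step.

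Combining these two steps shows $\text{asyvar}(\tilde{\beta}_{lm})$ equals the efficiency bound, completing the proof. The only potential obstacle is to confirm that replacing the population-level optimal $\alpha = c$ by its $\sqrt{n}$-consistent sample analogue $\hat{\alpha}$ does not inflate the asymptotic variance; but this is already taken care of by Proposition 1, which guarantees the asymptotic variance of $\tilde{\beta}_{ipw} - \hat{\alpha}^T \tilde{x}_{ipw}$ depends on $\hat{\alpha}$ only through its probability limit $\alpha$. So the argument is really just the two algebraic simplifications above.
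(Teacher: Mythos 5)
Your proposal is correct and follows essentially the same route as the paper's proof: you compute $\text{asycov}(\tilde{x}_{ipw},\tilde{\beta}_{ipw}) = \text{asycov}(\tilde{x}_{ipw},\tilde{x}_{ipw})\,c$ under the linear model, so the efficiency gain of the linear modification is $c^T\,\text{asycov}(\tilde{x}_{ipw},\tilde{x}_{ipw})\,c = \mathbf{E}\bigl[(c^T(X-\mathbf{E}X))^2/(p(X)(1-p(X)))\bigr]$, and you show this coincides exactly with the excess of $\text{asyvar}(\tilde{\beta}_{ipw})$ over the semi-parametric bound. The only differences are cosmetic: you additionally note that $\beta(X)-\beta\equiv 0$ and that the optimal $\alpha$ equals $c$, neither of which changes the argument.
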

	(Proof in Appendix)
	
	\section{Simulation Studies}
	\noindent In this section, we present some simulation studies to illustrate the variance reduction effect of linearly modified estimator $\tilde{\beta}_{lm}$ compared to $\tilde{\beta}_{ipw}$. We will show that in most general settings, $\tilde{\beta}_{lm}$ reduces a considerable proportion of the difference between $\text{asyvar}(\tilde{\beta}_{ipw})$ and semi-parametric efficiency bound, i.e., although $\tilde{\beta}_{lm}$ is still not asymptotically efficient, the difference between $\text{asyvar}(\tilde{\beta}_{lm})$ and semi-parametric bound is much smaller than the difference between $\text{asyvar}(\tilde{\beta}_{ipw})$ and semi-parametric efficiency bound. This fact will strongly support our motivation behind construction of $\tilde{\beta}_{lm}$ and further provide us with a decent method to obtain a nearly efficient estimator of ATE, especially when estimating propensity score function $p(X)$ is difficult or the corresponding regularity conditions are hard be satisfied, when the propensity score function is supposed to be known.
	
	\subsection{Basic Settings}
	\noindent We investigate efficiency gain of $\tilde{\beta}_{lm}$ from $\tilde{\beta}_{ipw}$ under a quite general circumstance, we assume that the potential outcomes satisfy a linear model and the propensity score function is in the logistic form. To be specific, for potential outcomes $Y^T$ and $Y^C$, we assume that
	\begin{equation}
	\beta_T(X) = a_1 + c_1^T X, \ \beta_C(X) = a_0 + c_0^T X, \tag*{}
	\end{equation}
	as for propensity score function, we assume that
	\begin{equation}
	p(X) = \frac{\exp(\gamma^T X + 1)}{1 + \exp(\gamma^T X + 1)}.  \tag*{}
	\end{equation}
	Here, we assume that $X=(I, U, C)$, where $I$, $U$ and $C$ are independent and identically distributed random variables. For both simplicity and integrity of our discussion, we will perform simulation studies under situations when their common distribution is uniform or normal with zero mean respectively.
	
	We mainly focus on two quantities that explain this variance reduction effect of $\tilde{\beta}_{lm}$ entirely in numerical simulation, the first one is the difference between $\text{asyvar}(\tilde{\beta}_{ipw})$ and semi-parametric efficiency bound, while the second is the difference between $\text{asyvar}(\tilde{\beta}_{ipw})$ and $\text{asyvar}(\tilde{\beta}_{lm})$. The latter one is always smaller than the former one since $\tilde{\beta}_{lm}$ is not always fully efficient. According to what we've shown in Theorem 8, if the outcome $Y$ satisfies a linear model in $D$ and $X$, or equivalently, $c_0=c_1$, then $\tilde{\beta}_{lm}$ is asymptotically efficient. In our simulation, we fix $c_1=(0, 0, 1)^T$ and let $c_0$ varies. Specifically speaking, for $\theta \in \lbrack 0, 2\pi )$, set
	\begin{equation}
	c_0(\theta) = (0, \sin{\theta}, \cos{\theta})^T,   \tag*{}
	\end{equation}
	respectively, recall that $X=(I, U, C)$, then we have
	\begin{equation}
	\beta_T(X) - \beta_T = C, \ \beta_C(X) - \beta_C = U \sin{\theta} + C \cos{\theta}    \tag*{}
	\end{equation}
	do not depend on instrumental variable $I$, hence the exclusion restriction is valid. We then choose $\gamma = (t, t, 0)$ for different values of $t$, i.e.,
	\begin{equation}
	p(X) = \frac{\exp(t(I+U) + 1)}{1 + \exp(t(I+U) + 1)}  \tag*{}
	\end{equation}
	do not depend on outcome predictor $C$. According to the Overlap assumption, the magnitude of $t$ can not be set too large since this will disobey the randomness of treatment assignment. When the common distribution of $I$, $U$ and $C$ is $\mathcal{U} \lbrack -1, 1 \rbrack$, we take $t=2, 1, 0.5$ respectively, when the common distribution of $I$, $U$ and $C$ is $\mathcal{N} (0, 1)$, we take $t=1, 0.5, 0.25$ respectively. 
	
	Now given the underlying distribution of covariates $X$, for each choice of pair of $t$ and $\theta \in \lbrack 0, 2\pi )$, we utilize Monte Carlo method to compute the difference between $\text{asyvar}(\tilde{\beta}_{ipw})$ and $\text{asyvar}(\tilde{\beta}_{lm})$, and the difference between $\text{asyvar}(\tilde{\beta}_{ipw})$ and semi-parametric efficiency bound, respectively. Then we calculate their ratio and denote that ratio by $R(\theta, t)$, then we have
	\begin{equation}
	R(\theta, t) = \frac{\text{asyvar}(\tilde{\beta}_{ipw}) - \text{asyvar}(\tilde{\beta}_{lm})}{\text{asyvar}(\tilde{\beta}_{ipw}) - \text{semi-parametric efficiency bound}}.  \tag*{}
	\end{equation}
	We then plot $R(\theta, t)$ versus $\theta$, for different choices of $t$. Then we calculate the following average reduced proportion
	\begin{equation}
	R(t) = \frac{1}{2 \pi} \int_{0}^{2 \pi} R(\theta, t) \mathrm{d} \theta \tag*{}
	\end{equation}
	for different values of $t$. Numerical results will be presented in next subsection.
	
	\subsection{Numerical Results}
	\noindent In this subsection, we present numerical results concerning plot of $R(\theta, t)$ versus $\theta$ and value of $R(t)$ for each covariates' distribution and choice of $t$. For $R(\theta, t)-\theta$ plot, see the following figures. The title "Uniform distribution" indicates that the common distribution of $I$, $U$ and $C$ is $\mathcal{U} \lbrack -1, 1 \rbrack$, while the title "Normal distribution" indicates that the common distribution of $I$, $U$ and $C$ is $\mathcal{N} (0, 1)$.
	\begin{center}
		\includegraphics[width=27em]{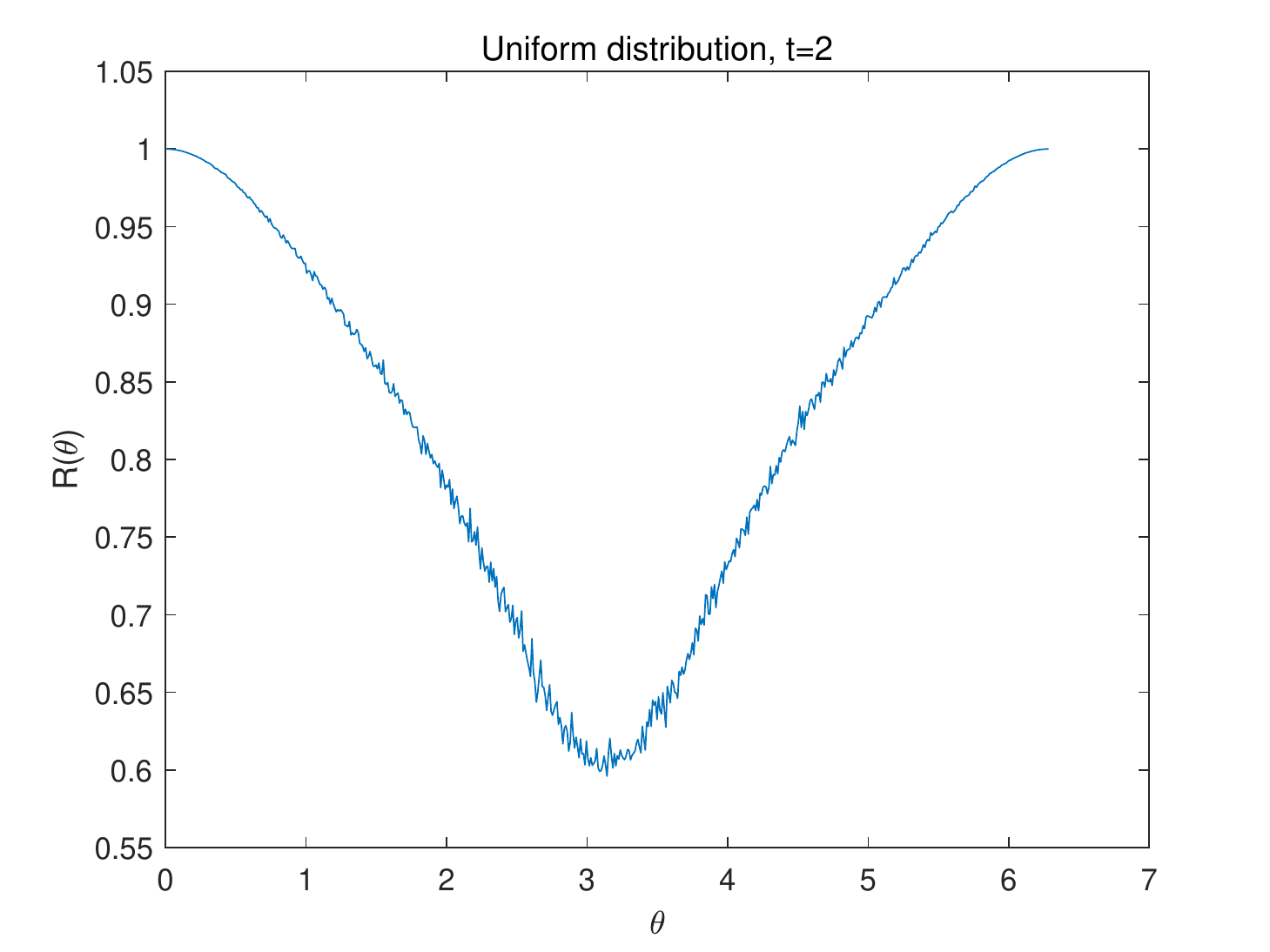}
	\end{center}
	\begin{center}
		Figure 2. $R(\theta, t)-\theta$ plot under covariates' distribution $\mathcal{U} \lbrack -1, 1 \rbrack$ and $t=2$.
	\end{center}
	\begin{center}
		\includegraphics[width=27em]{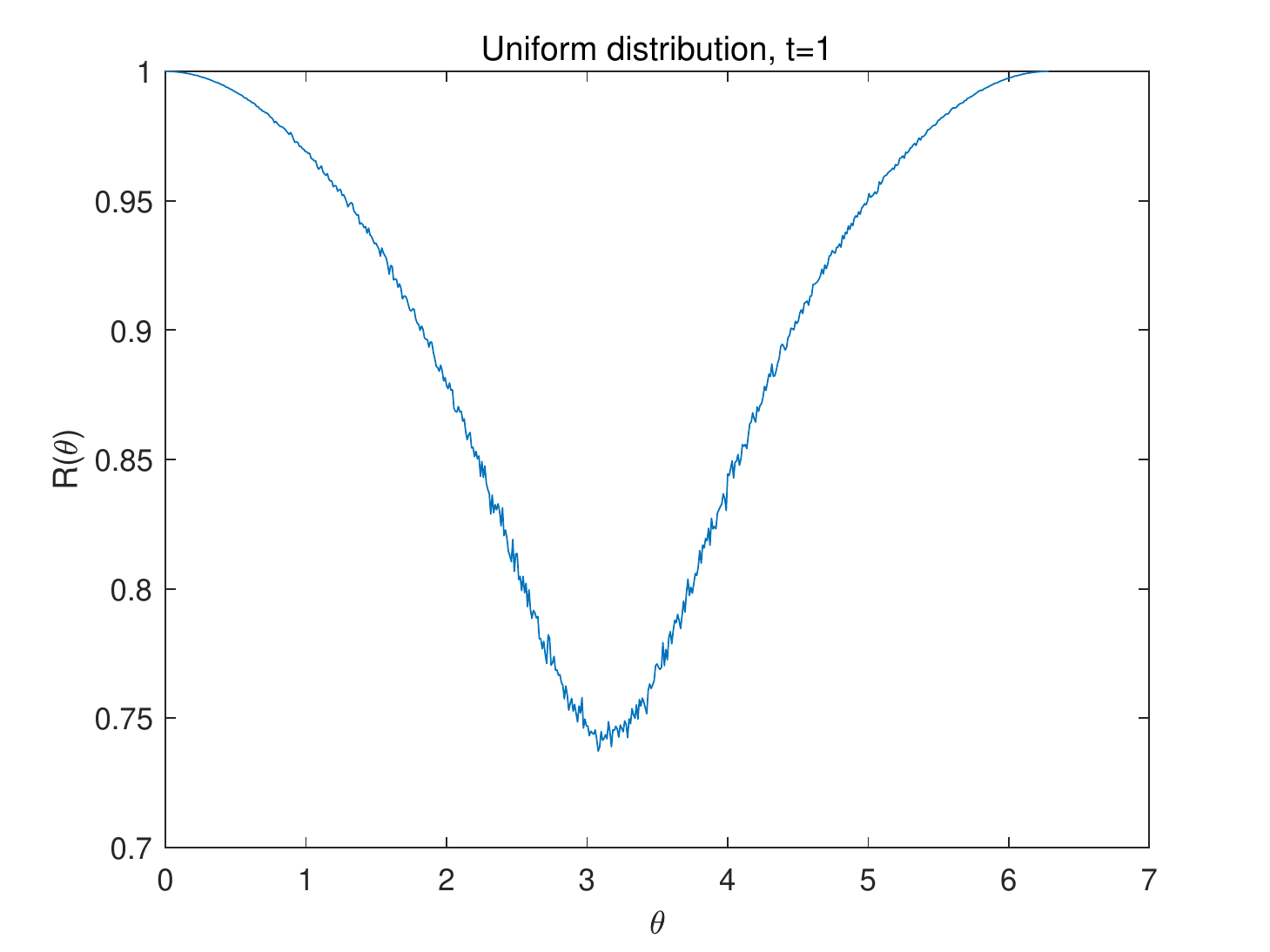}
	\end{center}
	\begin{center}
		Figure 3. $R(\theta, t)-\theta$ plot under covariates' distribution $\mathcal{U} \lbrack -1, 1 \rbrack$ and $t=1$.
	\end{center}
	\begin{center}
		\includegraphics[width=27em]{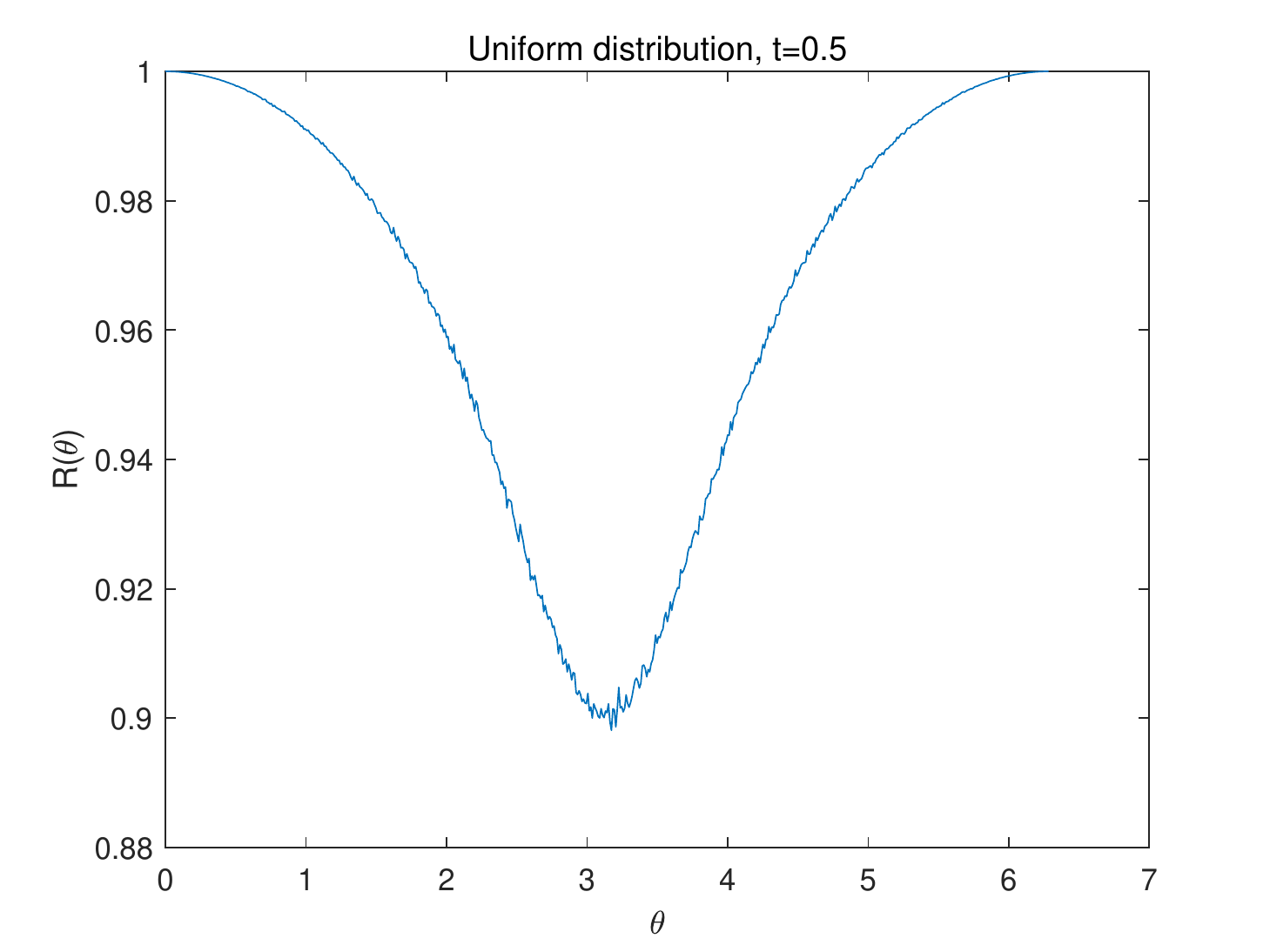}
	\end{center}
	\begin{center}
		Figure 4. $R(\theta, t)-\theta$ plot under covariates' distribution $\mathcal{U} \lbrack -1, 1 \rbrack$ and $t=0.5$.
	\end{center}
	\begin{center}
		\includegraphics[width=27em]{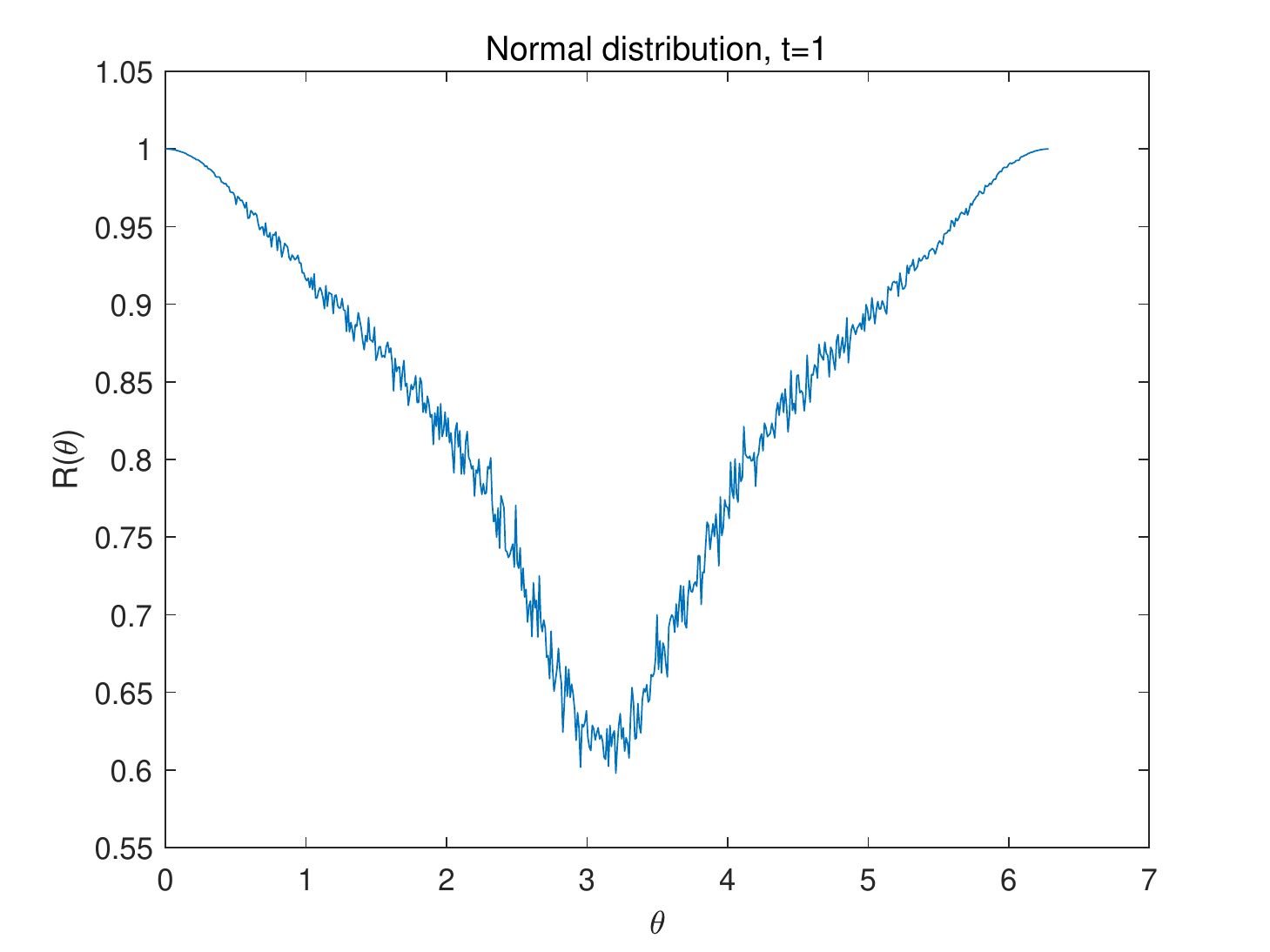}
	\end{center}
	\begin{center}
		Figure 5. $R(\theta, t)-\theta$ plot under covariates' distribution $\mathcal{N} (0, 1)$ and $t=1$.
	\end{center}
	\begin{center}
		\includegraphics[width=27em]{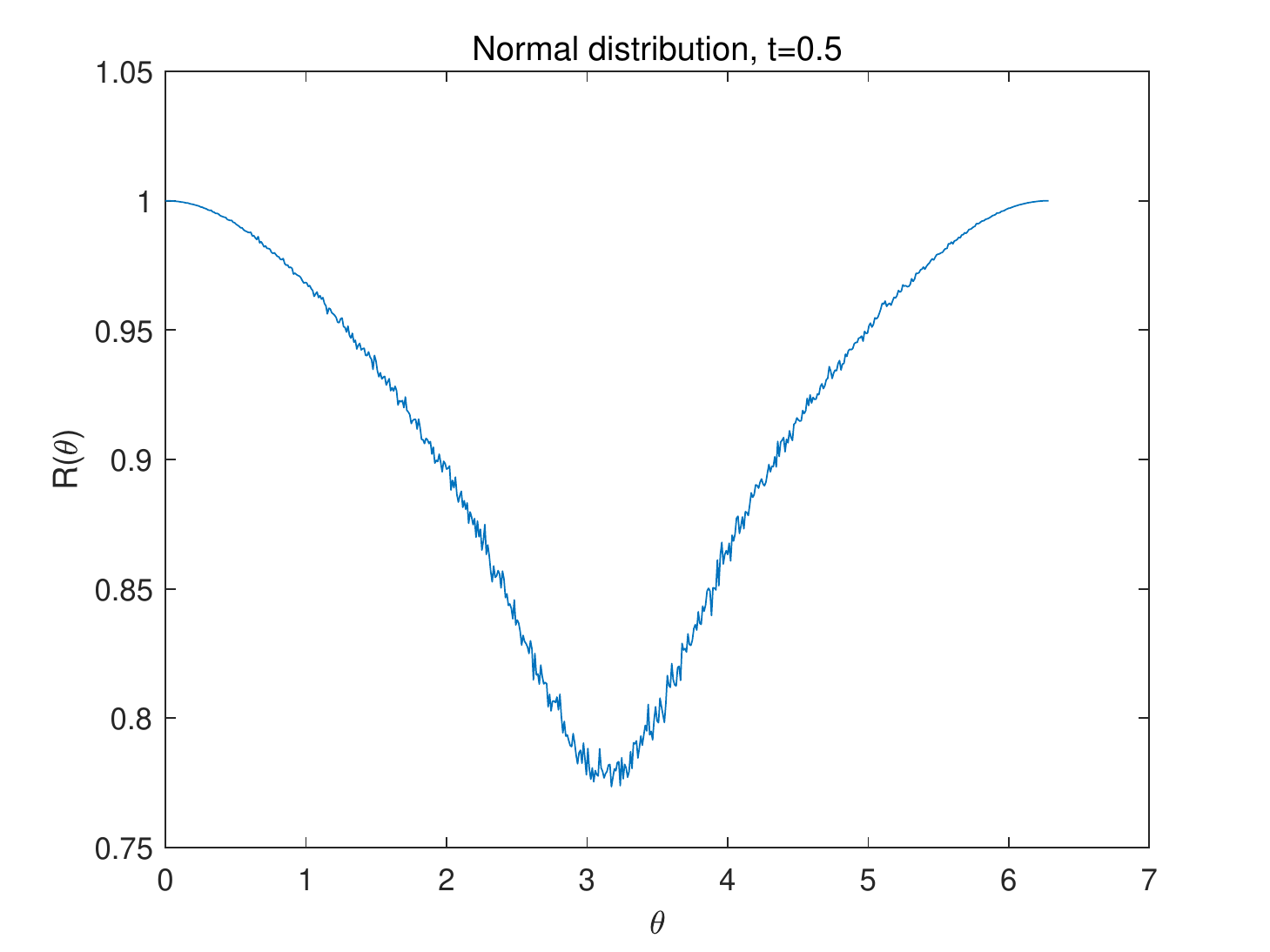}
	\end{center}
	\begin{center}
		Figure 6. $R(\theta, t)-\theta$ plot under covariates' distribution $\mathcal{N} (0, 1)$ and $t=0.5$.
	\end{center}
	\begin{center}
		\includegraphics[width=27em]{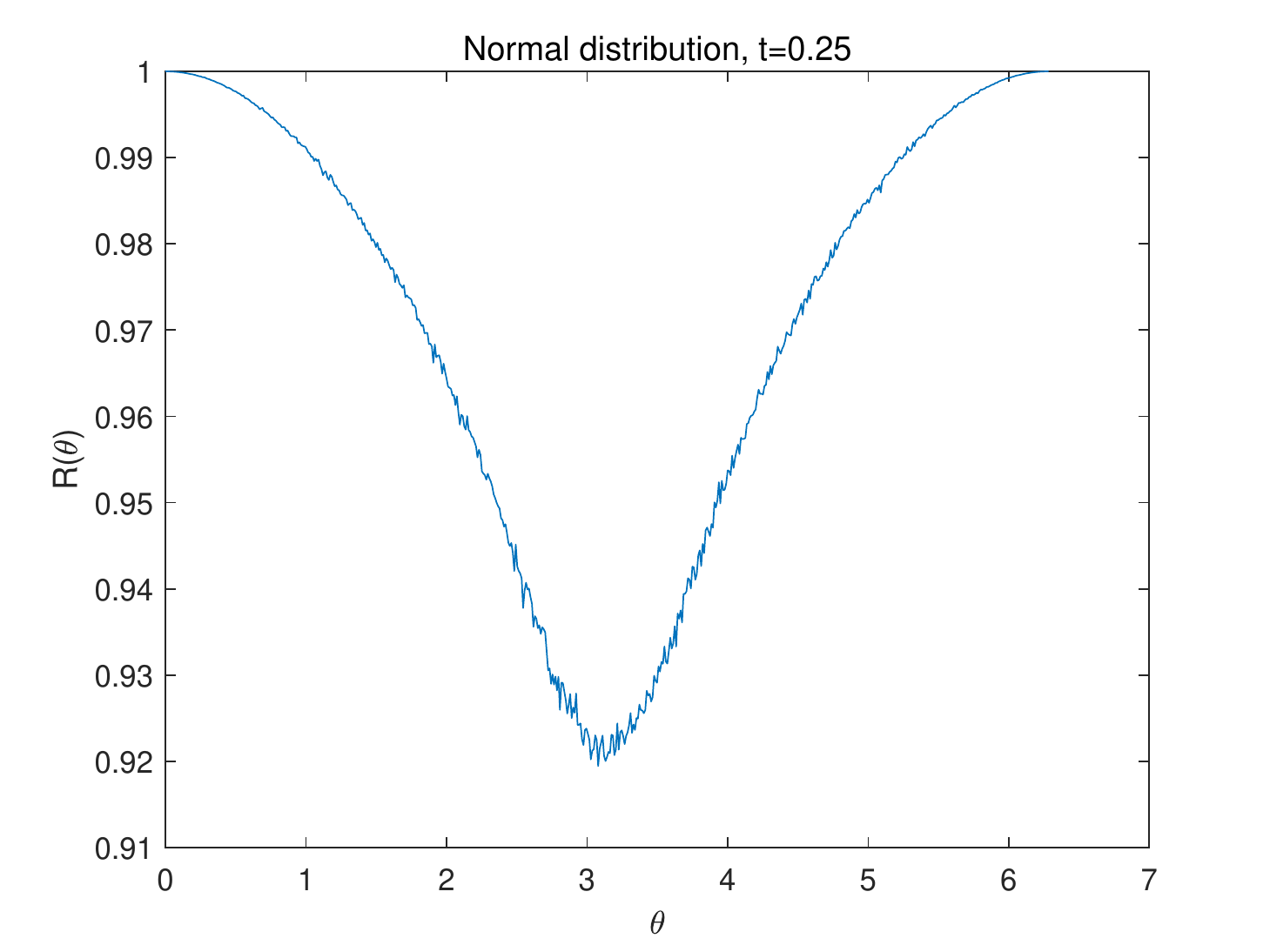}
	\end{center}
	\begin{center}
		Figure 7. $R(\theta, t)-\theta$ plot under covariates' distribution $\mathcal{N} (0, 1)$ and $t=0.25$.
	\end{center}
	As for the average reduction proportion $R(t)$, see the following tables.
	\begin{center}
		\begin{tabular}{|c|c|}
			\hline
			$t$ & $R(t) \ (\text{Distribution} \ \mathcal{U} \lbrack -1, 1 \rbrack)$ \\
			\hline
			$2$ & $0.8302$ \\
			\hline
			$1$ & $0.9022$ \\ 
			\hline
			$0.5$ & $0.9652$ \\
			\hline
		\end{tabular}
	\end{center}
	\begin{center}
		Table 1. The average reduction proportion under covariates' distribution $\mathcal{U} \lbrack -1, 1 \rbrack$.
	\end{center}
	\begin{center}
		\begin{tabular}{|c|c|}
			\hline
			$t$ & $R(t) \ (\text{Distribution} \ \mathcal{N} (0, 1))$ \\
			\hline
			$1$ & $0.8445$ \\
			\hline
			$0.5$ & $0.9141$ \\ 
			\hline
			$0.25$ & $0.9708$ \\
			\hline
		\end{tabular}
	\end{center}
	\begin{center}
		Table 2. The average reduction proportion under covariates' distribution $\mathcal{N} (0, 1)$.
	\end{center}
	From numerical results presented above, we have verified our previous assertion, i.e., that the linearly modified estimator $\tilde{\beta}_{lm}$ is nearly efficient in most general settings. In fact this will hold true as if
	\begin{equation}
	\frac{1}{p(X)} (\beta_T(X)-\beta_T) + \frac{1}{1-p(X)} (\beta_C(X)-\beta_C) \ \text{is strongly correlated with} \ X,       \tag*{}
	\end{equation}
	which is usually thought to be reasonable, if we incorporate sufficiently many outcome predictors in the covariates.
	
	\section{Discussion}
	\noindent We present a theoretical analysis for the effect of incorporating different kinds of covariates into the propensity score matching method on statistical efficiency of several estimators of average treatment effect in observational studies. We mainly focus on two types of covariate, i.e., the outcome predictors that are unassociated with treatment conditional on other covariates, and a specific kind of instrumental variables (exposure predictors) that are uncorrelated with potential outcomes when controlling for other covariates. The main result of this paper is that, including additional outcome predictors to the model helps lower asymptotic variance of efficient estimators, thus is informative for propensity score estimation. However, for those  estimators that use known propensity scores and are not asymptotically efficient, incorporating outcome predictors into the model do not improve their statistical efficiency, though do no harm to it as well. We also investigate the variance inflation effect caused by including a specific kind of instrumental variables aforementioned into propensity score estimation for all $\sqrt{n}$-consistent estimators reviewed in previous sections.
	
	Based on the fact that all known asymptotically efficient estimators of ATE require consistently estimating either the propensity score function or the conditional expectations of potential outcomes given the covariates under certain regularity conditions, we designed a linearly modified (LM) estimator based on the original IPW estimator. Our LM estimator can be viewed as a tradeoff between statistical efficiency and computational complexity, it enjoys the following two advantages comparing to the aforementioned estimators:
	\begin{enumerate}
	    \item On the one hand, the LM estimator is much more efficient than the original IPW estimator. In most general settings it's nearly efficient, this view point is numerically verified by simulation studies in Section 5, in which the potential outcomes satisfy a linear model and a logit model is assumed for the propensity score. A good result is that if the potential outcomes $Y^T$ and $Y^C$ are linearly dependent on covariates $X$ with same regression coefficient, then the LM estimator is fully efficient.
	    
	    \item On the other hand, when the propensity score function is known, the computation of LM estimator only relies on observed data and need not impose certain regularity conditions concerning the smoothness of propensity score function and boundedness of support of covariates. Unlike asymptotically efficient estimators aforementioned, the LM estimator is as easy to calculate as the original IPW estimator.
	\end{enumerate} 
	
	The discussion of this paper can be extended and continued along several directions. First, based on our previous results, effective covariate selection methods in causal inference must be proposed to exclude a specific kind of instrumental variables that are independent of potential outcomes conditional on other covariates and include outcome predictors. Second, when certain regularity conditions that enable us to consistently estimate the propensity scores are not satisfied, how can we estimate ATE as efficiently as we can? The LM estimator is an attempt but we will not stop here. Finally, in this paper we only show that additionally including a specific type of instrumental variables can lower statistical efficiency, since this issue has not been fully resolved now. Whether there exists a general criterion that differentiates instrumental variables harmful to efficient estimation from other covariates remains to be a problem for future researchers.

	\section{Appendix}
	\noindent In this appendix we give proofs to lemmas, theorems, and propositions presented but unproved in previous sections.
	\begin{proof}[Proof of Theorem 1]
		We first find out asymptotic expressions for empirical estimators $\hat{\mathbf{E}} \lbrack D_i Y_i \vert X_i \rbrack$ and $\hat{\mathbf{E}} \lbrack (1-D_i) Y_i \vert X_i \rbrack$. Take $\hat{\mathbf{E}} \lbrack D_i Y_i \vert X_i \rbrack$ as an example, for arbitrary $x$, we have
		\begin{align*}
		& \hat{\mathbf{E}} \lbrack D_i Y_i \vert X_i=x \rbrack - \mathbf{E} \lbrack D_i Y_i \vert X_i=x \rbrack  \\
		= & \frac{(1/n) \sum_{i=1}^{n} (D_i Y_i \mathbf{I}_{\lbrace X_i=x \rbrace} - \mathbf{E} \lbrack D_i Y_i \vert X_i=x \rbrack \mathbf{I}_{\lbrace X_i=x \rbrace})}{(1/n) \sum_{i=1}^{n} \mathbf{I}_{\lbrace X_i=x \rbrace}} \\
		= & \frac{1}{n \mathbf{P}(X_i=x)} \sum_{i=1}^{n} (D_i Y_i \mathbf{I}_{\lbrace X_i=x \rbrace} - \mathbf{E} \lbrack D_i Y_i \vert X_i=x \rbrack \mathbf{I}_{\lbrace X_i=x \rbrace}) + O_P(\frac{1}{n}),
		\end{align*}
		and similarly we can obtain that
		\begin{align*}
		& \hat{\mathbf{E}} \lbrack (1-D_i) Y_i \vert X_i=x \rbrack - \mathbf{E} \lbrack (1-D_i) Y_i \vert X_i=x \rbrack  \\
		= & \frac{1}{n \mathbf{P}(X_i=x)} \sum_{i=1}^{n} ((1-D_i) Y_i \mathbf{I}_{\lbrace X_i=x \rbrace} - \mathbf{E} \lbrack (1-D_i) Y_i \vert X_i=x \rbrack \mathbf{I}_{\lbrace X_i=x \rbrace}) + O_P(\frac{1}{n}). 
		\end{align*}
		Denote $n(x) = \sum_{i=1}^{n} \mathbf{I}_{\lbrack X_i=x \rbrack}$, now we have
		\begin{align*}
		& \tilde{\beta}_{imp} - \frac{1}{n} \sum_{i=1}^{n} (\beta_T(X_i) - \beta_C(X_i))   \\
		= & \frac{1}{n} \sum_{x} n(x) (\frac{\hat{\mathbf{E}} \lbrack D_i Y_i \vert X_i=x \rbrack - \mathbf{E} \lbrack D_i Y_i \vert X_i=x \rbrack}{p(x)} - \frac{\hat{\mathbf{E}} \lbrack (1-D_i) Y_i \vert X_i=x \rbrack - \mathbf{E} \lbrack (1-D_i) Y_i \vert X_i=x \rbrack}{1-p(x)})  \\
		= & \frac{1}{n} \sum_{x} \frac{n(x)}{n \mathbf{P}(X_i=x)} (\frac{\sum_{i=1}^{n} (D_i Y_i \mathbf{I}_{\lbrace X_i=x \rbrace} - \mathbf{E} \lbrack D_i Y_i \vert X_i=x \rbrack \mathbf{I}_{\lbrace X_i=x \rbrace})}{p(x)}    \\
		- & \frac{\sum_{i=1}^{n} ((1-D_i) Y_i \mathbf{I}_{\lbrace X_i=x \rbrace} - \mathbf{E} \lbrack (1-D_i) Y_i \vert X_i=x \rbrack \mathbf{I}_{\lbrace X_i=x \rbrace})}{1-p(x)}) + O_P(\frac{1}{n}) \\
		= & \frac{1}{n} \sum_{x} (\frac{\sum_{i=1}^{n} (D_i Y_i \mathbf{I}_{\lbrace X_i=x \rbrace} - \mathbf{E} \lbrack D_i Y_i \vert X_i=x \rbrack \mathbf{I}_{\lbrace X_i=x \rbrace})}{p(x)}    \\
		- & \frac{\sum_{i=1}^{n} ((1-D_i) Y_i \mathbf{I}_{\lbrace X_i=x \rbrace} - \mathbf{E} \lbrack (1-D_i) Y_i \vert X_i=x \rbrack \mathbf{I}_{\lbrace X_i=x \rbrace})}{1-p(x)}) + O_P(\frac{1}{n}) \\
		= & \frac{1}{n} \sum_{i=1}^{n} (\frac{D_i Y_i - \mathbf{E} \lbrack D_i Y_i \vert X_i \rbrack}{p(X_i)} - \frac{(1-D_i) Y_i - \mathbf{E} \lbrack (1-D_i) Y_i \vert X_i \rbrack} {1-p(X_i)}) + O_P(\frac{1}{n}).
		\end{align*}
		Therefore, $\tilde{\beta}_{imp}$ has influence function
		\begin{equation}
		\frac{DY}{p(X)} - \frac{(1-D)Y}{1-p(X)},  \tag*{}
		\end{equation}
		thus is not asymptotically efficient with asymptotic variance equal to
		\begin{equation}
		\mathbf{E} \lbrack \frac{\sigma_T^2(X) + \beta_T(X)^2}{p(X)} + \frac{\sigma_C^2(X) + \beta_C(X)^2}{1-p(X)} \rbrack - \beta^2,  \tag*{}
		\end{equation}
		which exceeds the semi-parametric efficiency bound by the magnitude of
		\begin{equation}
		\mathbf{E} \lbrack (\sqrt{\frac{1-p(X)}{p(X)}} \beta_T(X) + \sqrt{\frac{p(X)}{1-p(X)}} \beta_C(X))^2 \rbrack.  \tag*{}
		\end{equation}
		End of proof.
	\end{proof}
	
	\begin{proof}[Proof of Theorem 2]
		We directly compute influence function of $\tilde{\beta}_{ipw}$, in fact, it can be easily shown that
		\begin{align*}
		\tilde{\beta}_{ipw} = & \frac{(1/n) \sum_{i=1}^{n} D_i Y_i / p(X_i)}{(1/n) \sum_{i=1}^{n} D_i / p(X_i)} - \frac{(1/n) \sum_{i=1}^{n} (1 - D_i) Y_i / (1 - p(X_i))}{(1/n) \sum_{i=1}^{n} (1-D_i) / (1-p(X_i))}   \\
		= & \frac{1}{n} \sum_{i=1}^{n} \frac{D_i Y_i}{p(X_i)} (2 - \frac{1}{n} \sum_{i=1}^{n} \frac{D_i}{p(X_i)}) - \frac{1}{n} \sum_{i=1}^{n} \frac{(1 - D_i) Y_i}{1 - p(X_i)} (2 - \frac{1}{n} \sum_{i=1}^{n} \frac{1 - D_i}{1 - p(X_i)}) + O_P(\frac{1}{n}) \\
		= & \frac{1}{n} \sum_{i=1}^{n} \frac{D_i Y_i}{p(X_i)} - \frac{1}{n} \sum_{i=1}^{n} \frac{(1 - D_i) Y_i}{1 - p(X_i)} - \beta_T (\frac{1}{n} \sum_{i=1}^{n} \frac{D_i}{p(X_i)}-1) + \beta_C (\frac{1}{n} \sum_{i=1}^{n} \frac{1 - D_i}{1 - p(X_i)}-1) + O_P(\frac{1}{n}).
		\end{align*}
		Therefore $\tilde{\beta}_{ipw}$ has influence function equal to
		\begin{equation}
		\frac{DY}{p(X)} - \frac{(1-D)Y}{1-p(X)} - \beta_T \frac{D}{p(X)} + \beta_C \frac{1-D}{1-p(X)} + \beta.   \tag*{}
		\end{equation}
		Its asymptotic normal distribution can then be easily deduced.
	\end{proof}
	
	\begin{proof}[Proof of Lemma 3]
		We prove Lemma 3 by direct calculation. Appealing to the Law of Total Expectation, and the fact that $p(X)=p(X_0)$, we have
		\begin{align*}
		\mathbf{E} \lbrack \frac{\mathbf{E} \lbrack f \vert X \rbrack}{p(X)} + \frac{\mathbf{E} \lbrack g \vert X \rbrack}{1-p(X)} \rbrack = & \mathbf{E} \lbrack \frac{\mathbf{E} \lbrack f \vert X \rbrack}{p(X_0)} + \frac{\mathbf{E} \lbrack g \vert X \rbrack}{1-p(X_0)} \rbrack   \\
		= & \mathbf{E} \lbrack \mathbf{E} \lbrack \frac{\mathbf{E} \lbrack f \vert X \rbrack}{p(X_0)} + \frac{\mathbf{E} \lbrack g \vert X \rbrack}{1-p(X_0)} \vert X_0 \rbrack \rbrack \\
		= & \mathbf{E} \lbrack \frac{\mathbf{E} \lbrack \mathbf{E} \lbrack f \vert X \rbrack \vert X_0 \rbrack}{p(X_0)} + \frac{\mathbf{E} \lbrack \mathbf{E} \lbrack g \vert X \rbrack \vert X_0 \rbrack}{1-p(X_0)} \rbrack  \\
		= & \mathbf{E} \lbrack \frac{\mathbf{E} \lbrack f \vert X_0 \rbrack}{p(X_0)} + \frac{\mathbf{E} \lbrack g \vert X_0 \rbrack}{1-p(X_0)} \rbrack
		\end{align*}
		End of proof.
	\end{proof}
	
	\begin{proof}[Proof of Lemma 4]
		Based on the fact that $p(X)=p(X_0)$, using the Law of total expectation and Jensen's inequality we find that
		\begin{align*}
		& \mathbf{E} \lbrack (\sqrt{\frac{1-p(X)}{p(X)}} \mathbf{E} \lbrack f \vert X \rbrack + \sqrt{\frac{p(X)}{1-p(X)}} \mathbf{E} \lbrack g \vert X \rbrack)^2 \rbrack
		= \mathbf{E} \lbrack \mathbf{E} \lbrack (\sqrt{\frac{1-p(X_0)}{p(X_0)}} \mathbf{E} \lbrack f \vert X \rbrack + \sqrt{\frac{p(X_0)}{1-p(X_0)}} \mathbf{E} \lbrack g \vert X \rbrack)^2 \vert X_0 \rbrack \rbrack  \\
		\geq & \mathbf{E} \lbrack (\mathbf{E} \lbrack \sqrt{\frac{1-p(X_0)}{p(X_0)}} \mathbf{E} \lbrack f \vert X \rbrack + \sqrt{\frac{p(X_0)}{1-p(X_0)}} \mathbf{E} \lbrack g \vert X \rbrack \vert X_0 \rbrack)^2 \rbrack 
		\geq \mathbf{E} \lbrack (\sqrt{\frac{1-p(X_0)}{p(X_0)}} \mathbf{E} \lbrack f \vert X_0 \rbrack + \sqrt{\frac{p(X_0)}{1-p(X_0)}} \mathbf{E} \lbrack g \vert X_0 \rbrack)^2 \rbrack.
		\end{align*}
		End of proof.
	\end{proof}
	
	\begin{proof}[Proof of Theorem 6]
		Since $p(Y^T, Y^C \vert X) = p(Y^T, Y^C \vert X_1)$, we know that
		\begin{equation}
		\sigma_T^2(X) = \sigma_T^2(X_1), \ \beta_T(X) = \beta_T(X_1), \ \sigma_C^2(X) = \sigma_C^2(X_1), \ \beta_C(X) = \beta_C(X_1).  \tag*{}
		\end{equation}
		Hence actually we only need to prove that for any non-negative measurable functions $f(X)$ and $g(X)$, the following inequality holds:
		\begin{equation}
		\mathbf{E} \lbrack \frac{f(X_1)}{p(X_1)} + \frac{g(X_1)}{1-p(X_1)} \rbrack \leq \mathbf{E} \lbrack \frac{f(X_1)}{p(X)} + \frac{g(X_1)}{1-p(X)} \rbrack.     \tag*{}
		\end{equation}
		Appealing to Jensen's inequality, note that both $1/x$ and $1/(1-x)$ are convex in $(0, 1)$, we have
		\begin{align*}
		\mathbf{E} \lbrack \frac{f(X_1)}{p(X)} + \frac{g(X_1)}{1-p(X)} \rbrack = & \mathbf{E} \lbrack f(X_1) \mathbf{E} \lbrack \frac{1}{p(X)} \vert X_1 \rbrack + g(X_1) \mathbf{E} \lbrack \frac{1}{1-p(X)} \vert X_1 \rbrack \rbrack   \\
		\geq & \mathbf{E} \lbrack f(X_1) \frac{1}{\mathbf{E} \lbrack p(X) \vert X_1 \rbrack} + g(X_1) \frac{1}{\mathbf{E} \lbrack 1 - p(X) \vert X_1 \rbrack} \rbrack.
		\end{align*}
		Since
		\begin{equation}
		\mathbf{E} \lbrack p(X) \vert X_1 \rbrack = \mathbf{E} \lbrack \mathbf{E} \lbrack D \vert X \rbrack \vert X_1 \rbrack = \mathbf{E} \lbrack D \vert X_1 \rbrack = p(X_1),    \tag*{}
		\end{equation}
		the desired inequality is shown, end of proof.
	\end{proof}
	
	\begin{proof}[Proof of Lemma 6]
		Since influence function of $\tilde{\beta}_{ipw}$ has been obtained in proof of theorem 2, we only need to find out influence function of $\tilde{x}_{ipw}$, then the corresponding asymptotic covariance matrices can be directly calculated. In fact, we have
		\begin{align*}
		\tilde{x}_{ipw} = & \frac{(1/n) \sum_{i=1}^{n} D_i X_i/p(X_i)}{(1/n) \sum_{i=1}^{n} D_i/p(X_i)} - \frac{(1/n) \sum_{i=1}^{n} (1 - D_i) X_i/(1 - p(X_i))}{(1/n) \sum_{i=1}^{n} (1-D_i)/(1-p(X_i))} \\
		= & \frac{1}{n} \sum_{i=1}^{n} \frac{D_i X_i}{p(X_i)} (2 - \frac{1}{n} \sum_{i=1}^{n} \frac{D_i}{p(X_i)}) - \frac{1}{n} \sum_{i=1}^{n} \frac{(1 - D_i) X_i}{1 - p(X_i)} (2 - \frac{1}{n} \sum_{i=1}^{n} \frac{1 - D_i}{1 - p(X_i)}) + O_P(\frac{1}{n}) \\
		= & \frac{1}{n} \sum_{i=1}^{n} \frac{D_i X_i}{p(X_i)} - \frac{1}{n} \sum_{i=1}^{n} \frac{(1 - D_i) X_i}{1 - p(X_i)} - \mathbf{E} X (\frac{1}{n} \sum_{i=1}^{n} \frac{D_i}{p(X_i)} - \frac{1}{n} \sum_{i=1}^{n} \frac{1 - D_i}{1 - p(X_i)}) + O_P(\frac{1}{n}).
		\end{align*}
		Therefore $\tilde{x}_{ipw}$ has influence function equal to
		\begin{equation}
		\frac{DX}{p(X)} - \frac{(1-D)X}{1-p(X)} - \mathbf{E} X (\frac{D}{p(X)} - \frac{1-D}{1-p(X)}).   \tag*{}
		\end{equation}
		We then can easily verify expressions of $\text{asycov} (\tilde{x}_{ipw}, \tilde{x}_{ipw})$ and $\text{asycov} (\tilde{x}_{ipw}, \tilde{\beta}_{ipw})$ given in Lemma 3.
	\end{proof}
	
	\begin{proof}[Proof of Proposition 1]
		We only need to notice that since
		\begin{equation}
		\tilde{x}_{ipw} = \frac{1}{n} \sum_{i=1}^{n} (\frac{D_i X_i}{p(X_i)} - \frac{(1 - D_i) X_i}{1 - p(X_i)} - \mathbf{E} X (\frac{D_i}{p(X_i)} - \frac{1 - D_i}{1 - p(X_i)})) + O_P(\frac{1}{n}),   \tag*{}
		\end{equation}
		and the fact that
		\begin{equation}
		\mathbf{E} \lbrack \frac{DX}{p(X)} - \frac{(1-D)X}{1-p(X)} - \mathbf{E} X (\frac{D}{p(X)} - \frac{1-D}{1-p(X)}) \rbrack = 0. \tag*{}
		\end{equation}
		Then $\tilde{x}_{ipw} = O_P(1 / \sqrt{n})$, therefore if $\hat{\alpha} - \alpha = O_P(1 / \sqrt{n})$, we have
		\begin{equation}
		\sqrt{n} (\tilde{\beta}_{ipw} - \hat{\alpha}^T \tilde{x}_{ipw} - \beta) = \sqrt{n} (\tilde{\beta}_{ipw} - \alpha^T \tilde{x}_{ipw} - \beta) + O_P(\frac{1}{\sqrt{n}}),  \tag*{}
		\end{equation}
		Proposition 1 then naturally follows.
	\end{proof}
	
	\begin{proof}[Proof of Proposition 2]
		First note that $\bar{X} - \mathbf{E} X = O_P(1/n)$, therefore
		\begin{align*}
		\widehat{\text{asycov}} (\tilde{x}_{ipw}, \tilde{x}_{ipw}) = & \frac{1}{n} \sum_{i=1}^{n} \frac{1}{p(X_i) (1-p(X_i))} (X_i - \bar{X}) (X_i - \bar{X})^T  \\
		= & \frac{1}{n} \sum_{i=1}^{n} \frac{1}{p(X_i) (1-p(X_i))} (X_i - \mathbf{E} X) (X_i - \mathbf{E} X)^T + O_P(\frac{1}{\sqrt{n}})  \\
		= & \text{asycov} (\tilde{x}_{ipw}, \tilde{x}_{ipw}) + O_P(\frac{1}{\sqrt{n}}).
		\end{align*}
		Second, by definition we have
		\begin{align*}
		\widehat{\text{asycov}} (\tilde{x}_{ipw}, \tilde{\beta}_{ipw}) = & \frac{1}{n} \sum_{j=1}^{n} (X_j - \bar{X}) (\frac{D_j}{p(X_j)^2} (Y_j^T - \frac{1}{n} \sum_{i=1}^{n} \frac{D_i Y_i}{p(X_i)}) + \frac{1-D_j}{(1-p(X_j))^2} (Y_j^C - \frac{1}{n} \sum_{i=1}^{n} \frac{(1-D_i) Y_i}{1-p(X_i)}))   \\
		= & \frac{1}{n} \sum_{j=1}^{n} (X_j - \mathbf{E}X) (\frac{D_j}{p(X_j)^2} (Y_j^T - \beta_T) + \frac{1-D_j}{(1-p(X_j))^2} (Y_j^C - \beta_C)) + O_P(\frac{1}{\sqrt{n}})  \\
		= & \mathbf{E} \lbrack (X - \mathbf{E}X) (\frac{D}{p(X)^2} (Y^T - \beta_T) + \frac{1-D}{(1-p(X))^2} (Y^C - \beta_C)) \rbrack + O_P(\frac{1}{\sqrt{n}})  \\
		= & \mathbf{E} \lbrack (X - \mathbf{E}X) (\frac{1}{p(X)} (\beta_T(X) - \beta_T) + \frac{1}{1-p(X)} (\beta_C(X) - \beta_C)) \rbrack + O_P(\frac{1}{\sqrt{n}})  \\
		= & \text{asycov} (\tilde{x}_{ipw}, \tilde{\beta}_{ipw}) + O_P(\frac{1}{\sqrt{n}}).
		\end{align*}
		Hence both $\widehat{\text{asycov}} (\tilde{x}_{ipw}, \tilde{x}_{ipw})$ and $\widehat{\text{asycov}} (\tilde{x}_{ipw}, \tilde{\beta}_{ipw})$ are $\sqrt{n}-\text{consistent}$ estimators, end of proof.
	\end{proof}
	
	\begin{proof}[Proof of Theorem 7]
		We would like to take
		\begin{equation}
		\hat{\alpha} = \widehat{\text{asycov}} (\tilde{x}_{ipw}, \tilde{x}_{ipw})^{-1} \widehat{\text{asycov}} (\tilde{x}_{ipw}, \tilde{\beta}_{ipw}) \ \text{and} \ \alpha = \text{asycov} (\tilde{x}_{ipw}, \tilde{x}_{ipw})^{-1} \text{asycov} (\tilde{x}_{ipw}, \tilde{\beta}_{ipw}),  \tag*{}
		\end{equation}
		then we know that
		\begin{equation}
		\text{asyvar}(\tilde{\beta}_{lm}) = \text{asyvar}(\tilde{\beta}_{ipw}) - \text{asycov} (\tilde{x}_{ipw}, \tilde{\beta}_{ipw})^T \text{asycov} (\tilde{x}_{ipw}, \tilde{x}_{ipw})^{-1} \text{asycov} (\tilde{x}_{ipw}, \tilde{\beta}_{ipw}).   \tag*{}
		\end{equation}
		Then $\text{asyvar}(\tilde{\beta}_{lm}) = \text{asyvar}(\tilde{\beta}_{ipw})$ if and only if
		\begin{equation}
		\text{asycov} (\tilde{x}_{ipw}, \tilde{\beta}_{ipw}) = 0,   \tag*{}
		\end{equation}
		which is equivalent to the fact that
		\begin{equation}
		\frac{1}{p(X)} (\beta_T(X)-\beta_T) + \frac{1}{1-p(X)} (\beta_C(X)-\beta_C) \ \text{is uncorrelated with} \ X.       \tag*{}
		\end{equation}
		End of proof.
	\end{proof}
	
	\begin{proof}[Proof of Theorem 8]
		On the one hand, now it's easy to figure out that
		\begin{equation}
		\text{asycov} (\tilde{x}_{ipw}, \tilde{\beta}_{ipw}) = \mathbf{E} \lbrack \frac{1}{p(X)(1-p(X))} (X-\mathbf{E}X) (X-\mathbf{E}X)^T \rbrack c,   \tag*{}
		\end{equation}
		therefore the efficiency gain of $\tilde{\beta}_{lm}$ from $\tilde{\beta}_{ipw}$ is equal to
		\begin{equation}
		c^T \mathbf{E} \lbrack \frac{1}{p(X)(1-p(X))} (X-\mathbf{E}X) (X-\mathbf{E}X)^T \rbrack c.   \tag*{}
		\end{equation}
		On the other hand, we know that the difference between $\text{asyvar}(\tilde{\beta}_{ipw})$ and the semi-parametric efficiency bound equals to
		\begin{align*}
		& \mathbf{E} \lbrack (\sqrt{\frac{1-p(X)}{p(X)}}(\beta_T(X)-\beta_T) + \sqrt{\frac{p(X)}{1-p(X)}}(\beta_C(X)-\beta_C))^2 \rbrack  \\
		= & \mathbf{E} \lbrack (\sqrt{\frac{1-p(X)}{p(X)}} c^T (X - \mathbf{E} X) + \sqrt{\frac{p(X)}{1-p(X)}} c^T (X - \mathbf{E} X))^2 \rbrack   \\
		= & \mathbf{E} \lbrack \frac{1}{p(X) (1-p(X))}(c^T (X - \mathbf{E} X))^2 \rbrack = c^T \mathbf{E} \lbrack \frac{1}{p(X)(1-p(X))} (X-\mathbf{E}X) (X-\mathbf{E}X)^T \rbrack c.
		\end{align*}
		Therefore, $\text{asyvar}(\tilde{\beta}_{lm})$ reaches the semi-parametric efficiency bound.
	\end{proof}
	
	\bibliographystyle{plain}
	\bibliography{MyBib}
	
\end{document}